\documentclass{article}
\usepackage{graphicx} 

\usepackage{amsmath,amssymb,amsthm,yhmath,amsfonts,mathrsfs,enumitem,mathtools,dsfont}
\usepackage[utf8]{inputenc}
\usepackage{tikz}
\usetikzlibrary{shapes,arrows,positioning}

\newcommand{\W}{\mathbb{W}}
\renewcommand{\d}{\mathrm{d}}

\newcommand{\1}[1]{{\mathds{1}}_{\left\{#1\right\}}}

\usepackage{graphicx}
\usepackage{tikz}
\usetikzlibrary{automata,positioning}
\usepackage{url}

\newtheorem{theorem}{Theorem}[section]
\newtheorem{definition}[theorem]{Definition}
\newtheorem{lemma}[theorem]{Lemma}

\newtheorem{remark}[theorem]{Remark}

\newtheorem{conjecture}[theorem]{Conjecture}

\usepackage{authblk}

\title{Why is it easier to predict the epidemic curve than to reconstruct the underlying contact network?}
\author[1,2]{Dániel Keliger\footnote{Dániel Keliger is partially supported by the ERC Synergy under Grant No. 810115 - DYNASNET,  EK\"OP-24-3-BME-316  of the Ministry
for Culture and Innovation from the source of the National Research Development and Innovation
Fund and NKFI-FK-142124.  \includegraphics[width=3cm]{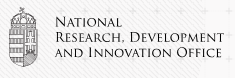}}}
\author[3,4]{Illés Horváth\footnote{Illés Horváth is partially supported by the OTKA K-138208 project.}}
\affil[1]{Department of Stochastics, Budapest University of Technology and Economics, Hungary}
\affil[2]{Alfréd Rényi Institute of Mathematics, Budapest, Hungary}
\affil[3]{Department of Networked Systems and Services,
Budapest University of Technology and Economics, Hungary}
\affil[4]{HUN-REN-BME Information Systems Research Group}

\begin{document}

\maketitle

\begin{abstract}
    We study the deterministic Susceptible-Infected-Susceptible (SIS) epidemic model on weighted graphs. In their numerical study \cite{reconstruction_mieghem2022} van Mieghem et al. have shown that it is possible to learn an estimated network from a finite time sample of the trajectories of the dynamics that in turn can give an accurate prediction beyond the sample time range, even though the estimated network might be qualitatively far from the ground truth. We give a mathematically rigorous derivation for this phenomenon, notably that for large networks, prediction of the epidemic curves is robust, while reconstructing the underlying network is ill-conditioned. Furthermore, we also provide an explicit formula for the underlying network when reconstruction is possible. At the heart of the explanation, we rely on Szemer\'edi's weak regularity lemma.
\end{abstract}

\section{Introduction}

One of the main approach of epidemic modeling is compartment-based system of differential equations where the compartments correspond to geographic locations or age groups in different stages of the infection  \cite{santiago}. Besides the initial conditions, infection and cure rates, arguably, the most challenging parameter to estimate is the underlying contact network on which the spreading phenomenon occurs. Direct methods usually rely on making modeling assumptions or using proxy networks based on social media networks, GPS and mobile data, commuting patterns or questioners \cite{mobile_epidemic, switchover2021}, methods which can be potentially intrusive in terms of privacy or give an unfaithful representation.

An alternative paradigm treats network estimation as a learning problem where we first use observations up to some time $T_0>0$ to find the parameters that are most consistent with the epidemic curves thus far, then using said parameters to extrapolate the curve into a future time $T>T_0$ \cite{reconstruction_mieghem2022}. Surprisingly, while this method yields an estimated network capable of making accurate predictions for a reasonably long time scale, the realized estimate might radically differ from the ground truth alluding to the fact that only partial information of the underlying network is essential in the evolution of the pandemic. In summary, they show that while the \emph{reconstruction} of the network is hard, \emph{predicting} the epidemic curves is robust. 

In \cite{reconstruction_mieghem2022} the phenomena was demonstrated numerically along with some heuristic arguments. In \cite{Beaufort2022} the authors study a slightly different setting where instead of commuting (individuals returning to their home vertex at a given rate \cite{santiago}), individuals perform random walks according to a diffusion matrix. They rigorously prove that almost every diffusion matrix from almost every initial condition can be reconstructed from the trajectories and numerically show that said reconstruction can be challenging. Reconstruction of the background network graphon based on the distribution of branching processes was examined in \cite{hladky2024}. \cite{reconstruction_entropy} provides a different setup with a stochastic process on a random graph, with an entropy inequality implying a tradeoff between prediction and reconstruction. Intuitively, if the evolution of the pandemic is sensitive to the details of the network, then one can extract a rich picture from the trajectories, while prediction becomes hard as even slight perturbations of the network results in a drastic change of dynamic. On the other hand, if many realizations of the random graph give similar trajectories, extrapolation becomes more feasible.

In the present paper, we follow the setting of \cite{reconstruction_mieghem2022} for the Susceptible-Infected-Susceptible (SIS) model and give a proper mathematical setup and rigorous proof that shows that the prediction problem is robust, while the reconstruction problem is ill-conditioned for large networks. To the best of our knowledge, these results are the first in this endeavor. Furthermore, we also give an explicit formula for the underlying network using only observable quantities for cases when the network can be uniquely reconstructed.

We believe the concepts and methods are robust for generalization to other kind of models; an important notion is the observable quantities, and irregularities of the evolution also potentially play a role. These could be examined on a case-by-case basis; in the present paper, we do not pursue a general framework.

The rest of the paper is structured as follows: In Section \ref{s:setup} we give an introduction of the SIS model and rigorously define the reconstruction and prediction problems. In Section \ref{s:reuslts} we state the main results. Section \ref{s:graphon} introduces concepts and notations regarding graphons \cite{lovaszbook} which are crucial for the proofs but non-essential in terms of stating the problems and the corresponding theorems. Section \ref{s:proofs} contains the proofs for the main results.

\section{Setup}
\label{s:setup}
In this section we give a precise formulation of the model and the problems of interest.

\subsection{The SIS model}

We study the deterministic SIS metapopulation model. For simplicity of presentation, we choose to focus on SIS model, but most of the results can be extended to a wider class of epidemic models including the SIR or SEIR processes with appropriate modifications (e.g. to the set of observable quantities).

Assume there are $n$ communities represented by the vertex set $V= \{1, \dots , n \}$. The graph is equipped with both symmetric edge weights $w_{ij}=w_{ji}\geq 0$ forming the matrix $W:= \left( w_{ij}\right)_{i,j \in V}$. Symmetry is usually assumed for similar epidemic models, but is actually not necessary for most of the calculations.

Positive vertex weights $\pi_{i} \geq 0$ are also given with normalization
\begin{align*}
\sum_{i=1}^{n} \pi_i =1.
\end{align*}
The vector $\pi:= \left( \pi_{i} \right)_{i \in V}$ is referred to as the distribution of the population between communities. We also use the matrix notation $\Pi:= \operatorname{diag} \left( \pi \right)$. 

$w_{ij}$ represents the strength of interaction between communities $i,j \in V$. It can also be proportional to the edge density if the weighted graph corresponds to a coarse grained version of a simple graph. Any global constants (e.g. infection rate usually denoted by $\beta$) can also be absorbed in the notation.

$\gamma_i \geq 0$ denotes the curing rate of community $i \in V$; the corresponding vector notation is $\gamma:= \left(\gamma_i \right)_{i \in V}.$ Inhomogeneity of the curing rate might be an important modeling aspect when communities correspond to age groups or other biological characteristics of the population that can affect the curing rate.

The special case $\gamma=0$ is referred to as the SI model. 

In the SIS model, individuals can occupy one of two states: susceptible and infected. Let $z_i(t)$ stand for the ratio of infected individuals at community $i \in V$ at time $t \geq 0.$ The dynamics is given by the following system of ODEs:
\begin{align}
\label{eq:NIMFA}
\frac{\d}{\d t} z_{i}(t)=(1-z_{i}(t)) \sum_{j \in V} w_{ij} \pi_{j} z_{j}(t)-\gamma_{i} z_{i}(t).
\end{align}

Further explanation of \eqref{eq:NIMFA}. The term $-\gamma_{i} z_{i}(t)$ corresponds to the curing of infected individuals in community $i \in V$. $z_i(t)$ is increased when a susceptible individual from community $i$ makes an interaction with an infected individual from some community $j$. Assuming that the communities are well-mixed, the probability that the vertex from community $i$ is susceptible and the vertex from community $j$ is infected is $(1-z_i(t))z_{j}(t)$. Finally, $w_{ij} \pi_{j}$ is proportional to the rate at which such interaction accrues regardless of the state of individuals. 

Using the notations $\mathcal{A}:=W \Pi= \left(a_{ij} \right)_{i,j \in V}$ and $z(t):= \left(z_{i}(t) \right)_{i \in V}$, we may rewrite \eqref{eq:NIMFA} as
\begin{align}
\label{eq:NIMFA_A}
\frac{\d}{\d t} z_{i}(t)=(1-z_i(t)) \left(\mathcal{A} z(t) \right)_{i}-\gamma_i z_{i}(t)=:F_i^{\mathcal{A}}(z(t)) \qquad (i\in V).
\end{align}

The \emph{degree} of community $i \in V$ is defined as
\begin{align}
\label{eq:delta}
d_i:= \sum_{j \in V} a_{ij}
\end{align}
representing the maximal rate at which an individual at community $i \in V$ can get infected.

The system \eqref{eq:NIMFA} is intimately related to the stochastic version of the SIS process on sufficiently dense simple graphs. Fix the parameters $n, W, \pi, \gamma,$ and generate a random graph on $N$ vertices where two vertices from communities $i,j \in V$ respectively are connected with probability (proportional to) $\kappa^N w_{ij}$, where $\kappa^N$ is a density parameter such that $\kappa^N \gg \frac{ \log N}{ N}.$ Then, according to \cite{Keliger2022}, the ratio of infected in community $i \in V$ will stochastically converge to $z_{i}(t)$ uniformly on any compact time interval $[0,T].$ (Note that although the setting in \cite{Keliger2022} does not allow inhomogeneous $\gamma_i$ parameters directly, one can augment the state space to consist of pairs $(i,S), (i,I)$ for different $i \in V$.  Since $n$ is fixed, the results in \cite{Keliger2022} remain valid for this state space as well.) 

Since the right hand side of \eqref{eq:NIMFA} is locally Lipschitz-continuous, \eqref{eq:NIMFA} has a unique solution. Furthermore, standard arguments show that the region $[0,1]^n$ is positive invariant \cite{SISdyn}, that is,
$$0 \leq z_i(0) \leq 1 \quad \forall i \in V \quad \Longrightarrow \quad 0 \leq z_i(t) \leq 1 \quad \forall t \geq 0 \,\forall i \in V,$$
so $z_i(t)$ can indeed be interpreted as a ratio.

We point out that based on the Cauchy–Kovalevskaya theorem $z_i(t)$ is also an analytic function of $t$ as the right hand side of \eqref{eq:NIMFA} is itself analytic. This property will be frequently exploited during the proofs, starting with the following lemma. 
\begin{lemma}
\label{l:analytic}
Let $h : [0,\infty [ \to \mathbb{R}$ be an analytic function. Assume for some $T_0>0$ we have $\forall t \in [0,T_0] \ h(t)=0$.  Then, $\forall t \geq 0 \ h(t)=0.$  
\end{lemma}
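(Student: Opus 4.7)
The plan is to use a standard connectedness argument together with the identity theorem for real-analytic functions. Define
$$S := \{t \in [0,\infty) : h^{(k)}(t) = 0 \text{ for all } k \geq 0\}.$$
I will show $S$ is nonempty, open, and closed in $[0,\infty)$; since $[0,\infty)$ is connected, this forces $S = [0,\infty)$ and in particular $h \equiv 0$.

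First, $S$ is nonempty: $h \equiv 0$ on $[0,T_0]$ implies all derivatives vanish at every interior point of this interval, so $(0,T_0) \subseteq S$. Second, $S$ is closed in $[0,\infty)$ because it equals the countable intersection $\bigcap_{k \geq 0} \{t : h^{(k)}(t) = 0\}$ of closed sets, each being the zero set of a continuous function (analyticity implies $C^\infty$ smoothness, hence continuity of every derivative).

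The key step is openness of $S$. Pick any $t_0 \in S$. By analyticity of $h$, there is an $r > 0$ such that on the intersection of $(t_0 - r, t_0 + r)$ with $[0,\infty)$ the function $h$ is represented by its Taylor series
$$h(t) = \sum_{k=0}^\infty \frac{h^{(k)}(t_0)}{k!} (t-t_0)^k.$$
Since every coefficient vanishes by the definition of $S$, we get $h \equiv 0$ on this neighborhood, which in turn forces all derivatives of $h$ to vanish throughout the neighborhood, placing it inside $S$.

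The only subtlety is what is meant by analyticity on the closed half-line $[0,\infty)$; I would take it in the usual sense that around each point (including $t=0$) there is a positive radius of convergence, so the Taylor expansion argument above applies uniformly. Beyond this bookkeeping, there is no real obstacle, and the conclusion $S = [0,\infty)$ follows immediately from connectedness.
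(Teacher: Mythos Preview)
Your proof is correct and is essentially the same as the paper's: both hinge on the observation that if all derivatives of $h$ vanish at a point, analyticity makes $h$ vanish on a neighborhood, and then one propagates this across $[0,\infty)$. The paper packages this as a supremum-plus-contradiction argument (take $T=\sup\{t: h\equiv 0 \text{ on }[0,t]\}$ and derive $T+\varepsilon\le T$), while you use the equivalent open--closed--nonempty connectedness formulation; your version is if anything slightly more carefully stated.
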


We will make use of the Gram matrix
\begin{align}
\label{eq:Gdef}
G_{ij}(T):= \int_{0}^{T}z_{i}(t) z_{j}(t) \d t.
\end{align}
It is symmetric, positive semi-definite, hence it induces a semi-scalar product
\begin{align*}
\|v \|_{G(T)}^2:= \langle v, G(T) v \rangle=\int_{0}^{T} \langle v, z(t) \rangle^2 \d t,
\end{align*}
where $\langle \cdot,\cdot\rangle$ is the standard scalar product on $\mathbb{R}^n$.

$\|v \|_{G(T)}$ is zero on the nullspace of $G(T)$, denoted by $\mathcal{N}.$ Lemma \ref{l:analytic} applied for the analytic function $h(t):=\langle v, z(t) \rangle $ implies that for any $0<T_0 \leq T$ the nullspaces are the same, hence, $\mathcal{N}$ does not depend on $T$ as long as $T>0$.

We will also use the $L^r(V,\pi)$ norm:
\begin{align*}
 \|v \|_{r,\pi}:= \left (\sum_{i \in V} \left| v_i \right|^r \pi_i\right)^{1/r}. 
\end{align*}

Besides the induced matrix norm $\| \cdot \|_{p, \pi}$, we will also use the so called cut norm defined as
\begin{align}
\label{eq:cut_norm_finite}
\|W \|_{\square}:= \max_{S,S' \subseteq V} \left | \sum_{ \substack{i \in S \\ j \in S'}}w_{ij }\pi_i \pi_j \right|.
\end{align}

\subsection{The reconstruction problem}

The setup for the reconstruction problem is the following. We have perfect knowledge of $z(t)$ over a time interval $[0,T_0]$ ($T_0>0$), along with $\pi$ and $\gamma.$ The task is to find $W$. 

In practice, $\pi$ being available is a reasonable assumption since it only concerns the size of communities. Knowledge about $\gamma$ can be interpreted in several ways: it could be experimentally measured during clinical trials, or a different approach would be to assume that we do not only observe the total change of $z(t)$ over time, but we also observe the gains $\left( \frac{\d}{\d t} z_{i}(t) \right)_{+}:= (1-z_{i}(t)) \left(\mathcal{A}z(t) \right)_{i}$ and losses $\left(\frac{\d}{\d t} z_{i}(t) \right)_{-}:=-\gamma_{i}z_{i}(t)$ separately, corresponding to new infections and recoveries respectively. From these, $\gamma_i$ can be expressed as
\begin{align*}
\gamma_{i}=- \frac{1}{z_{i}(t)} \left( \frac{\d }{\d t}z_{i}(t) \right)_{-}.    
\end{align*}

We note that $\mathcal{A}z(t)$ is also an observable quantity since it has the form of
\begin{align}
\label{eq:A_observe}
\left(\mathcal{A}z(t) \right)_{i}= \frac{\frac{\d}{\d t}z(t)+\gamma_iz_{i}(t)}{1-z_{i}(t)}.
\end{align}

We set some technical restrictions on the parameters. In general, we want the weights $w_{ij}$ and the curing rates $\gamma_i$ to be bounded from above, and $z_i(t)$ to be bounded away from $1$ (as $1-z_i(t)$ appears in the denominator of the right hand side of \eqref{eq:A_observe}). We also want $z_i(0), \pi_i$ and $w_{ij}$ to be bounded away from 0 to ensure a lower bound on the speed of the disease spreading (especially for the SI case).

Accordingly, for bounds $0 <m \leq 1/2$ and $0<M$, we define the parameter sets
\begin{align}
\label{eq:Q}
\begin{split}
\mathcal{Q}_{m,M}^{(n)}:=&  \Bigg\{(z(0),\gamma,\pi) \in \mathbb{R}^n \times \mathbb{R}^n \times \mathbb{R}^n  | \forall i \in V \ m \leq z_{i}(0) \leq 1-m, \\
& 0 \leq \gamma_i \leq M, 0\leq \pi_i, \sum_{j\in V} \pi_j =1 \Bigg\},
\end{split}\\
\label{eq:P}
\mathcal{P}_{m,M}^{(n)}:=& \left \{ (W,q) \in \mathbb{R}^{n \times n} \times \mathcal{Q}_{m,M}^{(n)}  \big| W^{\intercal}=W,\, \forall i,j \in V \, m \leq w_{ij} \leq M \right \}.
\end{align}
\begin{remark}
For a fixed $n$, the assumption $m \leq w_{ij} \leq M$ is not too restrictive, but later on it will be used as $n \to \infty$ with $0<m \leq M$ fixed, essentially making the graph sequence dense, enabling us to apply the theory of graphons in section \ref{s:graphon}. See also the proofs of Theorems \ref{t:reconstruction_bad} and \ref{t:uniform_prediction}.
\end{remark}

For the reconstruction problem, we will always assume $\pi_i>0\, \forall i \in V$ as the dynamics is insensitive to vertices with $\pi_i=0$ (so reconstruction for those parts of the graph would be impossible anyway). $\pi_i=0$ is not an issue for the prediction problem though, where it is convenient to allow $\pi_i=0$ as $n$ is allowed to increase.

Since \eqref{eq:NIMFA} is well-posed, the parameters $ p \in \mathcal{P}_{m,M}^{(n)} $ uniquely determine the trajectories $ \left(z(t) \right)_{0 \leq t \leq T_0}$, hence we may define the mapping
\begin{align*}
\phi_{T_0}(p): \ p \mapsto \left(z(t) \right)_{0 \leq t \leq T_0}.
\end{align*}

\begin{definition}
 For $p=(W, q) \in  \mathcal{P}_{m,M}^{(n)}$,  we say $W$ is \emph{$q$-reconstructible} if for all $T_0>0$, the mapping $\phi_{T_0} \left( \cdot, q \right)$ is injective at $W$.
\end{definition}

Reconstruction is impossible when $T_0=0$ as $W$ can be chosen independently of $q$. Hence, we assume the process is observable on a possibly small, but positive length interval. 

Not all $W$ are reconstructible. Assume that the parameter $p$ is regular in the sense that there exist  $\bar{\gamma}, \bar{d}, \bar{z}(0)$ such that for all $i \in V$ we have $d_i=\bar{d}, \ \gamma_i=\bar{\gamma}$ and $z_{i}(0)=\bar{z}(0).$ It is easy to check that for all $i \in V$ the solution has the form of $z_i(t)=\bar{z}(t)$ where $\bar{z}(t)$ solves
\begin{align*}
\frac{\d}{\d t}\bar{z}(t)=\bar{d} (1-\bar{z}(t))  \bar{z}(t)-\bar{\gamma}\bar{z}(t).
\end{align*}
However, many other set of parameters $p \in \mathcal{P}_{m,M}^{(n)}$ satisfy the regularity assumption with the same triple $(\bar{d}, \bar{\gamma}, \bar{z}(0))$.

\subsection{The prediction problem}

The prediction problem is the following: given $q$ and the curve $\left(z(t) \right)_{0 \leq t \leq T_0}$, we aim to predict $z(t)$ for $t>T_0$.

One approach to the prediction problem is to find an approximate weighted graph $\hat{W}$ such that the corresponding solution $\hat{z}(t)$ is close to the original solution $z(t)$ on $[0,T_0]$ according to some error measure, and use $\hat{z}(t)$ as a prediction for $z(t)$ for $t>T_0$. In \cite{reconstruction_mieghem2022} the authors constructed a $\hat{W}$ that is consistent with $z(t)$ on $[0,T_0]$ by minimizing the error between $F^{\mathcal{A}}(z(t))$ and $F^{\hat{\mathcal{A}}}(z(t)).$ More precisely, they solve
\begin{align*}
& \operatorname{argmin}_{\hat{a}_{i,1}, \dots \hat{a}_{i,n}} \sum_{j=1}^{J} \left(  \frac{\Delta z_i(t_j)}{\Delta t_j}-F_i^{\hat{\mathcal{A}}}(z(t_j)) \right)^2+\rho_i \sum_{j=1}^{J} \hat{a}_{ij} \\
& s.t. \ \hat{a}_{ij} \geq 0
\end{align*}
for all $i \in V.$ $\rho_i \geq 0$ serve as parameters to tune the relaxation terms.

We take a similar approach, but use a different error measure; instead of using the derivatives $F^{\mathcal{A}}(z(t))$ we make use of the neighborhood statistics $\mathcal{A}z(t)$ and $\hat{\mathcal{A}}z(t)$ ($\hat{\mathcal{A}}:=\hat{W} \Pi$) as these are also observable and have nicer analytical properties. 

We introduce the $L^r\left(V, \pi \right)$ type error terms:
\begin{align}
\label{eq:H}
H_{r,T}(W,W'):= \left(\int_{0}^{T} \|\mathcal{A}z(t)-\hat{\mathcal{A}} z(t) \|_{r,\pi}^r \d t \right)^{1/r}.
\end{align}
$H_{r,T_0}$ is an observable quantity but $H_{r,T}$ is not as it depends on the future values of $z(t).$

The following lemma states that if the error $H_{2,T}$ is small, then the trajectories generated by $W$ and $\hat{W}$ stay close together on $[0,T].$
\begin{lemma}
\label{l:H1}
Let $p, \hat{p} \in \mathcal{P}_{0,M}^{(n)}$ with the corresponding solutions of \eqref{eq:NIMFA} being $z(t)$ and $\hat{z}(t).$

\begin{align*}
\sup_{0 \leq t \leq T} \|z(t)-\hat{z}(t) \|_{1,\pi} \leq& \left(\|z(0)-\hat{z}(0) \|_{1,\pi}+T \|\gamma-\hat{\gamma} \|+H_{1,T} \right)e^{3MT}.
\end{align*}
Furthermore, assuming $\hat{z}(0)=z(0)$ and $\hat{\gamma}=\gamma$, we also have
\begin{align*}
   \sup_{0 \leq t \leq T} \|z(t)-\hat{z}(t) \|_{1,\pi} \leq e^{3MT}H_{1,T} \leq \sqrt{T}e^{3MT} H_{2,T}. 
\end{align*}
\end{lemma}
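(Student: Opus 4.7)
The plan is to set $e(t):=z(t)-\hat z(t)$ and derive an integral inequality for $\|e(t)\|_{1,\pi}$ of the form $u(t)\le C+3M\int_0^t u(s)\,ds$, then apply Gronwall. The second inequality will follow from the first by Jensen (since $\pi$ is a probability measure) and Cauchy--Schwarz in time.

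Subtracting the two copies of \eqref{eq:NIMFA_A} and telescoping the nonlinear terms, I would write
\begin{align*}
(1-z_i)(\mathcal{A}z)_i-(1-\hat z_i)(\hat{\mathcal{A}}\hat z)_i
&=-e_i(\mathcal{A}z)_i+(1-\hat z_i)\bigl[((\mathcal{A}-\hat{\mathcal{A}})z)_i+(\hat{\mathcal{A}}e)_i\bigr],\\
-\gamma_i z_i+\hat\gamma_i\hat z_i
&=-\gamma_i e_i+(\hat\gamma_i-\gamma_i)\hat z_i.
\end{align*}
Integrating $\frac{d}{ds}e_i(s)$ from $0$ to $t$ and applying the triangle inequality gives
$$|e_i(t)|\le|e_i(0)|+\int_0^t\Bigl|\tfrac{d}{ds}e_i(s)\Bigr|\,ds.$$
Now I would weight by $\pi_i$ and sum. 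Using $w_{ij},\hat w_{ij},\gamma_i\le M$ together with $0\le z_j,\hat z_i\le1$ and $\sum_j\pi_j=1$, the three self-terms each contribute at most $M\|e(s)\|_{1,\pi}$ to the $L^1(V,\pi)$-norm of $\frac{d}{ds}e(s)$: namely $\|\mathcal{A}z\|_{\infty}\le M$, the bound $\sum_i\pi_i|(\hat{\mathcal{A}}e)_i|\le M\|e\|_{1,\pi}$, and the term $\gamma_i e_i$ directly. The inhomogeneous terms are bounded by $\|(\mathcal{A}-\hat{\mathcal{A}})z(s)\|_{1,\pi}$ and $\|\gamma-\hat\gamma\|_{1,\pi}$ respectively (since $1-\hat z_i,\hat z_i\le1$). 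Hence
$$\|e(t)\|_{1,\pi}\le\|e(0)\|_{1,\pi}+T\|\gamma-\hat\gamma\|+H_{1,T}+3M\int_0^t\|e(s)\|_{1,\pi}\,ds,$$
valid for all $t\in[0,T]$, and Gronwall's inequality delivers the first claim.

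For the second part, substituting $\hat z(0)=z(0)$ and $\hat\gamma=\gamma$ kills the first two summands in the constant, giving $\sup_{[0,T]}\|z-\hat z\|_{1,\pi}\le e^{3MT}H_{1,T}$. Finally, since $\pi$ is a probability measure, Jensen yields $\|v\|_{1,\pi}\le\|v\|_{2,\pi}$, and Cauchy--Schwarz in time gives
$$H_{1,T}=\int_0^T\|\mathcal{A}z(t)-\hat{\mathcal{A}}z(t)\|_{1,\pi}\,dt\le\int_0^T\|\mathcal{A}z(t)-\hat{\mathcal{A}}z(t)\|_{2,\pi}\,dt\le\sqrt{T}\,H_{2,T}.$$

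The only subtle point is the algebraic bookkeeping in the telescoping step and making sure the factor $3M$ emerges correctly; the rest is a routine Gronwall argument. There is no real analytic obstacle here, so I expect the main care to be in choosing the right add-and-subtract decomposition so that what remains of $\mathcal{A}-\hat{\mathcal{A}}$ is exactly the integrand of $H_{1,T}$ and is not inflated by extra factors.
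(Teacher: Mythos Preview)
Your proposal is correct and follows essentially the same approach as the paper: the paper proves the graphon version (Lemma~\ref{l:H1_graphon}) by exactly the same add-and-subtract decomposition of the nonlinear and curing terms, the same three $M$-contributions yielding the $3M$ Gr\"onwall constant, and the same identification of the remaining term with the integrand of $H_{1,T}$. The only cosmetic difference is that the paper works pointwise in $x$ before integrating, and does not spell out the Jensen/Cauchy--Schwarz step for $H_{1,T}\le\sqrt{T}\,H_{2,T}$, which you supply correctly.
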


Let $\eta^{(i)}$ denote the $i$th row vector of $\mathcal{A}-\hat{\mathcal{A}}$ (that is, $\eta^{(i)}_j:= a_{ij}-\hat{a}_{ij}).$ Then
\begin{align*}
  \| \left(\mathcal{A}-\hat{\mathcal{A}} \right) z(t) \|_{2,\pi}^2=\sum_{i \in V} \langle \eta^{(i)}, z(t) \rangle^2 \pi_i,  
\end{align*}
implying
\begin{align}
\label{eq:H2_noice}
H_{2,T}^2=\sum_{i \in V} \left \| \eta^{(i)} \right \|_{G(T)}^2 \pi_i.
\end{align}

The minimum of $H_{2,T_0}=H_{2,T_0}(\hat{W})$ as a function of $\hat{W}$ is 0, obtained at $\hat{W}=W.$ Using Lemma \ref{l:analytic} for the function $h(t)=H_{2,t},$ we obtain $H_{2,T_0}=0 \ \Rightarrow H_{2,T}=0.$ This means that if we manage to find the minimum of $H_{2,T_0}$ without error, then perfect prediction is possible.

However, this is rarely possible; typically, we target $H_{2,T_0}$ to be smaller than some given threshold, and the task is then to show continuous dependence of the future error $H_{2,T}$ on the past error $H_{2,T_0}$, that is, we want to show that for all $\varepsilon >0$ there is a $\delta>0$ such that $H_{2,T_0}<\delta  \Rightarrow H_{2,T}<\varepsilon.$

\section{Results}
\label{s:reuslts}

\subsection{Results for the reconstruction problem}
\label{s:results_reconstruction}

We start by giving a characterization of $q$-reconstructability.
\begin{theorem}
\label{t:reconstruction_characterisation}
Let $p=(W,q) \in \mathcal{P}_{m,M}^{(n)}$ and assume $\pi_i>0\, \forall i \in V.$ 
 
The following properties are equivalent:

 \begin{enumerate}
     \item\label{condrecona} $W$ is $q$-reconstructible.
     \item\label{condreconb} For any $T_0>0$ and $v \in \mathbb{R}^n$ $\forall t \in [0,T_0] \ \langle v, z(t) \rangle \Rightarrow v=0. $
     \item\label{condreconc} For any $T_0>0$, $G(T_0)$ is invertible.
 \end{enumerate}

\end{theorem}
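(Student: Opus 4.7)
The plan is to establish $(\ref{condreconb}) \Leftrightarrow (\ref{condreconc})$ directly from the definition of the Gram matrix, then close the triangle via $(\ref{condreconb}) \Rightarrow (\ref{condrecona})$ using that $\mathcal{A} z(t)$ is an observable quantity, and finally $(\ref{condrecona}) \Rightarrow (\ref{condreconb})$ by contrapositive, constructing an explicit perturbation $\hat{W} \neq W$ that yields the same trajectory when condition \ref{condreconb} fails.

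For $(\ref{condreconb}) \Leftrightarrow (\ref{condreconc})$, the key observation is that $G(T_0)$ is symmetric positive semi-definite, so $v \in \ker G(T_0)$ iff $\langle v, G(T_0) v \rangle = \int_0^{T_0} \langle v, z(t) \rangle^2 \d t = 0$, which by continuity of $z(\cdot)$ is equivalent to $\langle v, z(t) \rangle = 0$ throughout $[0, T_0]$. Hence $\ker G(T_0)$ is exactly the set of $v$ appearing in \ref{condreconb}, and the two properties are equivalent.

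For $(\ref{condreconb}) \Rightarrow (\ref{condrecona})$, I would assume $\hat{W}$ also reproduces the trajectory $z(\cdot)$ on $[0, T_0]$. The observability identity \eqref{eq:A_observe} immediately yields $\mathcal{A} z(t) = \hat{\mathcal{A}} z(t)$ on $[0, T_0]$, so every row $\eta^{(i)}$ of $\mathcal{A} - \hat{\mathcal{A}}$ satisfies $\langle \eta^{(i)}, z(t) \rangle = 0$ there. By \ref{condreconb}, $\eta^{(i)} = 0$ for every $i$, giving $\mathcal{A} = \hat{\mathcal{A}}$. Since $\pi_i > 0$ makes $\Pi$ invertible, $W = \hat{W}$.

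The core of the argument is the contrapositive of $(\ref{condrecona}) \Rightarrow (\ref{condreconb})$: given $v \neq 0$ with $\langle v, z(t)\rangle = 0$ on $[0, T_0]$, I propose the rank-one symmetric perturbation $\hat{W} := W + \epsilon\, \Pi^{-1} v v^{\intercal} \Pi^{-1}$, which satisfies $(\hat{\mathcal{A}} - \mathcal{A}) z(t) = \epsilon \Pi^{-1} v \langle v, z(t)\rangle = 0$ on $[0, T_0]$. Consequently $z(\cdot)$ solves \eqref{eq:NIMFA_A} with $\hat{\mathcal{A}}$ in place of $\mathcal{A}$, and uniqueness of solutions for this locally Lipschitz ODE forces $\phi_{T_0}(\hat{W}, q) = \phi_{T_0}(W, q)$, contradicting injectivity at $W$. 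The main obstacle I anticipate is ensuring $\hat{W} \in \mathcal{P}_{m,M}^{(n)}$: for $W$ in the interior of $[m, M]^{n \times n}$, small $|\epsilon|$ of an appropriate sign suffices, while on the boundary one should pick a perturbation direction adapted to the active constraints within the space of symmetric matrices whose rows lie in $\Pi^{-1} \mathcal{N}$.
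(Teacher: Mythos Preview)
Your proposal is correct and follows essentially the same route as the paper: the equivalence $(\ref{condreconb})\Leftrightarrow(\ref{condreconc})$ via the Gram identity, the implication $(\ref{condreconb})\Rightarrow(\ref{condrecona})$ via the observability of $\mathcal{A}z(t)$, and the contrapositive of $(\ref{condrecona})\Rightarrow(\ref{condreconb})$ via the rank-one symmetric perturbation $\hat W = W + \epsilon\,\Pi^{-1}vv^{\intercal}\Pi^{-1}$ are exactly what the paper does (the paper writes this as $\hat w_{ij}=w_{ij}-\varepsilon u_iu_j$ with $u_i=v_i/\pi_i$). The only cosmetic difference is that you conclude $\hat z=z$ on $[0,T_0]$ directly from ODE uniqueness, while the paper routes through $H_{2,T_0}=0$ and Lemma~\ref{l:H1}; your argument is slightly more direct and equally valid.

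On your final concern about $\hat W\in\mathcal{P}_{m,M}^{(n)}$: the paper does not resolve this either. It explicitly remarks after the theorem that the constructed $\hat W$ only lies in $\mathcal{P}_{m',M'}^{(n)}$ for arbitrary $0<m'<m\le M<M'$, so non-reconstructibility is understood over a slightly enlarged parameter set rather than strictly within $\mathcal{P}_{m,M}^{(n)}$. You need not attempt the boundary-adapted construction you sketch; simply note, as the paper does, that small enough $|\epsilon|$ keeps $\hat W$ in any fixed open enlargement of the box.
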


$W$ not being $q$-reconstructible does not necessarily mean that the counterexample $\hat{W}$ which generates the same trajectory also satisfies $(\hat{W},q)\in \mathcal{P}_{m,M}^{(n)}.$ That said, in the proof of Theorem \ref{t:reconstruction_characterisation}, the $\hat{W}$ constructed satisfies $\hat{W} \sim \mathcal{P}_{m',M'}$ for arbitrary $0<m'<m \leq M <M'$. 


Reconstructability can also be understood in the literal sense, that is, there is an explicit formula for $W$. First define
\begin{align*}
    R(T_0):= \int_{0}^{T_0} \mathcal{A} z(t) z^{\intercal}(t) \d t.
\end{align*}
Both $G(T_0)$ (defined in \eqref{eq:Gdef}) and $R(T_0)$ are observable matrices which satisfy the identity
\begin{align}
 \label{eq:A_explicit}
 \mathcal{A}G(T_0)=R(T_0).
\end{align}
According to Theorem \ref{t:reconstruction_characterisation}, if $W$ is $q$-reconstructible, then $G(T_0)$ is invertible, and
\begin{align}
\label{eq:W_explicit}
    W=R(T_0)G^{-1}(T_0) \Pi^{-1}.
\end{align}

For any $\hat{\mathcal{A}}=\hat{W} \Pi$ satisfying $\hat{\mathcal{A}}G(T_0)=R(T_0)$, we also have
\begin{align*}
H_{2,T_0}^2(\hat{W})=&\int_{0}^{T_0} \sum_{i \in V} \left( \left(\mathcal{A}-\hat{\mathcal{A}} \right)z(t) \right)_i^2 \pi_i \d t\\
=&\sum_{i  \in V} \pi_i  \Bigg(\underbrace{\int_{0}^{T_0}  \left(\mathcal{A}-\hat{\mathcal{A}} \right)z(t) z^{\intercal}(t) \d t }_{=R(T_0)-\hat{\mathcal{A}}G(T_0)=0}\left(\mathcal{A}-\hat{\mathcal{A}} \right)^{\intercal} \Bigg)_{ii}=0, 
\end{align*}
therefore, even if  the solution to \eqref{eq:A_explicit} is not unique, any other solution gives the same trajectory as $W$. 

None of the conditions of Theorem \ref{t:reconstruction_characterisation} are easy to check directly for the parameter $p$. We conjecture that there exists an equivalent condition that can be checked directly. In order to present it, we define block regularity.

\begin{definition}
  We call $p \in \mathcal{P}_{m,M}^{(n)}$ \emph{block regular} if there is a partition $V_1,\dots, V_K$ of $V$ for some $K<n$ such that
  \begin{align*}
    &\forall 
    1 \leq k \leq K \  i_1,i_2 \in V_k \ z_{i_1}(0)=z_{i_2}(0), \ \gamma_{i_1}=\gamma_{i_2 }, \\
    & \forall 1 \leq k,l \leq K \ i_1, i_2 \in V_k,  \ \sum_{j \in V_l}a_{i_1j}=\sum_{j \in V_l} a_{i_2j}.
  \end{align*}
\end{definition}

\begin{conjecture}
\label{c:regular}
Assuming $\forall i \in V \ \pi_i>0$ and  $p=(W, q) \in \mathcal{P}_{m,M}^{(n)}$ $W$ is $q$-reconstructible if and only if $p$ is not block regular.     
\end{conjecture}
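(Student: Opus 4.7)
The forward direction (block regular $\Rightarrow$ non-reconstructible) is a straightforward symmetry argument. Assuming $p$ is block regular with partition $V_1,\dots,V_K$ and $K<n$, the right-hand side of \eqref{eq:NIMFA} is equivariant under the product group $\prod_{k=1}^K\mathrm{Sym}(V_k)$ because $z_i(0)$, $\gamma_i$ and $\sum_{j\in V_l}a_{ij}$ all depend only on the block of $i$. Uniqueness of the solution then yields $z_{i_1}(t)\equiv z_{i_2}(t)$ whenever $i_1$ and $i_2$ share a block; picking $i_1\ne i_2$ in some block of size at least two (which exists since $K<n$), the vector $v=e_{i_1}-e_{i_2}\ne 0$ satisfies $\langle v,z(t)\rangle\equiv 0$, so Theorem~\ref{t:reconstruction_characterisation}\eqref{condreconb} gives non-reconstructibility.

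For the converse I would build the partition directly from the trajectories. Assume $W$ is not $q$-reconstructible; by Theorem~\ref{t:reconstruction_characterisation} and Lemma~\ref{l:analytic} the subspace $\mathcal N:=\{v\in\mathbb R^n : \langle v,z(t)\rangle\equiv 0\ \forall t\ge 0\}$ is nonzero and independent of the observation window. Define $i\sim j$ iff $z_i\equiv z_j$ (equivalently $e_i-e_j\in\mathcal N$) and let $V_1,\dots,V_K$ be the resulting equivalence classes, with common trajectory $\bar z_k(t)$ on $V_k$. The condition $z_{i_1}(0)=z_{i_2}(0)$ for $i_1\sim i_2$ is immediate. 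Differentiating $z_{i_1}\equiv z_{i_2}$ once and substituting \eqref{eq:NIMFA} yields the functional identity
\[(1-\bar z_k(t))\sum_{l=1}^K\beta_l\,\bar z_l(t)=(\gamma_{i_1}-\gamma_{i_2})\,\bar z_k(t),\qquad \beta_l:=\sum_{j\in V_l}(a_{i_1 j}-a_{i_2 j}).\]
Block regularity would then follow if this identity forces $\beta_l=0$ for every $l$ and $\gamma_{i_1}=\gamma_{i_2}$; the required nontriviality $K<n$ I would obtain by noting that any nonzero $v\in\mathcal N$ couples at least two coordinates of $z$, so at least one equivalence class has size greater than one.

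The main obstacle is extracting $\beta_l=0$ from the displayed identity, which requires a genuine notion of independence -- not merely pairwise distinctness -- among the analytic functions $\{1,\bar z_l,\bar z_k\bar z_l : 1\le l\le K\}$. The natural route is induction on $n$: average \eqref{eq:NIMFA} over each block to obtain a reduced ODE on $K<n$ coordinates with effective vertex masses $\bar\pi_k=\sum_{i\in V_k}\pi_i$ and a reduced weight matrix $\bar W$, invoke the induction hypothesis to deduce that the reduced system is $\bar q$-reconstructible (so its trajectories satisfy no nontrivial linear relation), and transfer this back to rule out the $\beta_l$. The delicate point is that the averaging $\bar W$ is well-defined only when the row sums $\sum_{j\in V_l}a_{ij}$ are already block-constant, which is precisely the block-regularity conclusion we are trying to prove. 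Breaking this circularity -- either by constructing a canonical reduction that does not presuppose block regularity, or by exploiting the rigidity of the quadratic nonlinearity $(1-z_i)(\mathcal Az)_i-\gamma_i z_i$ directly to show that every element of $\mathcal N$ is in fact a combination of difference vectors $e_{i_1}-e_{i_2}$ with $i_1\sim i_2$ -- is where the central technical difficulty lies, and it is also where one would need to verify whether the conjecture holds in full generality or requires strengthening (for instance by restricting to uniform $\gamma$).
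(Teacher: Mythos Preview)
The paper presents this statement as a \emph{conjecture} and only establishes the ``only if'' direction (block regular $\Rightarrow$ non-reconstructible). Your argument for that direction is correct and essentially identical to the paper's: both observe that the solution is constant on blocks, pick $v=e_{i_1}-e_{i_2}$ with $i_1,i_2$ in a common block, and invoke condition~\ref{condreconb} of Theorem~\ref{t:reconstruction_characterisation}.

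For the converse the paper offers no proof; it is left open. Your sketch correctly isolates the main obstruction (linear independence of the reduced trajectories, and the circularity in defining a reduced weight matrix before block regularity is established), but there is an earlier gap you pass over too quickly. You assert that a nonzero $v\in\mathcal N$ forces some equivalence class under $i\sim j\Leftrightarrow z_i\equiv z_j$ to have size at least two, hence $K<n$. This does not follow: a nontrivial linear relation $\sum_i v_i z_i(t)\equiv 0$ need not make any two coordinate functions identically equal. If the $z_i$ are pairwise distinct yet linearly dependent over $\mathbb R$, your partition is the trivial one with $K=n$, and the argument stalls before you ever reach the displayed identity for the $\beta_l$. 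Ruling this out would require precisely the structural fact you name at the end---that every element of $\mathcal N$ is a combination of coordinate-difference vectors---so the central difficulty enters one step earlier than your outline suggests.
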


The only if part of Conjecture \ref{c:regular} is relatively straightforward. Block regular graphs can be reduced to a graph on $K$ vertices. Fix $i \in V_k$. Define $\bar{z}_{k}(0):=z_{i}(0), \ \bar{\gamma}_k:=\gamma_i, \ \bar{\pi}_{k}:=\sum_{j \in V_k} \pi_j$ and 
\begin{align*}
\bar{a}_{kl:}=\sum_{i \in V_k} \frac{\pi_i}{\bar{\pi}_k} \sum_{i \in V_K} \sum_{j \in V_l} a_{ij}.
\end{align*}
Then
\begin{align*}
  \bar{w}_{kl}:= \frac{1}{\bar{\pi}_k \bar{\pi}_l}\sum_{i \in V_k}\sum_{j \in V_l} w_{ij}\pi_i \pi_j  
\end{align*}
is symmetric and satisfies $\bar{w}_{kl}\bar{\pi}_l=\bar{a}_{kl}.$ 

Clearly $\bar{p}:=\left(\bar{W}, \bar{z}(0),  \bar{\gamma}, \bar{\pi} \right) \in \mathcal{P}_{m,M}^{(K)}.$ Furthermore, the solution $z(t)$ has the following form:
$$\forall 1 \leq k \leq K, \ i \in V_K, t \geq 0 \ z_{i}(t)=\bar{z}_k(t),$$
where $\bar{z}(t)$ solves \eqref{eq:NIMFA} with parameter $\bar{p}.$


Choose $i_1,i_2\in V_k$, $i_1\neq i_2$, and let
\begin{align*}
    v_j= 
    \left\{\begin{array}{cl}
        1  & \text{if } j=i_1, \\
        -1 & \text{if } j=i_2, \\
        0, & \text{otherwise.}
    \end{array}\right.
\end{align*}
Then clearly $\langle v, z(t)\rangle=z_{i_1}(t)-z_{i_2}(t)=0,$ so according to Condition \ref{condreconb} of Theorem \ref{t:reconstruction_characterisation}, $W$ is non-$q$-reconstructible.




For the special case of the SI processes, there is a neat sufficient condition for reconstructability.

\begin{theorem}
\label{t:different_d}
For $q \in \mathcal{Q}_{m,M}^{(n)}$, assume the following properties:
\begin{itemize}
\item $\gamma=0$ (SI model), and
\item $\forall i \in V \ \pi_i>0$, and
\item the degrees $d_1, \dots, d_n$ are all different.
\end{itemize}
Then $W$ is $q$-reconstructible.
\end{theorem}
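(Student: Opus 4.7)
The plan is to invoke Theorem \ref{t:reconstruction_characterisation}, condition \ref{condreconb}: it suffices to show that if $v \in \mathbb{R}^n$ satisfies $\langle v, z(t) \rangle = 0$ for all $t \in [0, T_0]$, then $v = 0$. By Lemma \ref{l:analytic} applied to the analytic function $t \mapsto \langle v, z(t) \rangle$, this identity extends to all $t \geq 0$, so I can read off $v$ from the long-time asymptotics of $z(t)$.

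For the SI model, $y_i(t) := 1 - z_i(t)$ satisfies $\dot y_i = -y_i (\mathcal{A} z)_i$, or equivalently
\begin{equation*}
\frac{d}{dt} \log y_i(t) = -d_i + (\mathcal{A} y(t))_i.
\end{equation*}
Since $\dot z_i \geq 0$ and $z_j(0) \geq m$, $w_{ij} \geq m$, $\sum_j \pi_j = 1$, one gets $(\mathcal{A} z(t))_i \geq m \cdot d_i \geq m^2$, so $y_i(t) \to 0$ at least as fast as $e^{-m^2 t}$. Consequently $I_i := \int_0^\infty (\mathcal{A} y(s))_i \, ds$ is finite and integrating the identity above yields the sharp asymptotics
\begin{equation*}
y_i(t) = C_i e^{-d_i t} (1 + o(1)), \qquad C_i := y_i(0)\, e^{I_i} > 0,
\end{equation*}
where the positivity of $C_i$ comes from $y_i(0) = 1 - z_i(0) \geq m > 0$.

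To finish, write $\langle v, z(t) \rangle = \sum_i v_i - \sum_i v_i y_i(t) \equiv 0$; letting $t \to \infty$ forces $\sum_i v_i = 0$ and hence $\sum_i v_i y_i(t) \equiv 0$ on $[0, \infty)$. Using the distinct-degree hypothesis, order the vertices so that $d_{i_1} < d_{i_2} < \cdots < d_{i_n}$. Multiplying the identity $\sum_i v_i y_i(t) = 0$ by $e^{d_{i_1} t}$ and letting $t \to \infty$, the $i_1$ term converges to $v_{i_1} C_{i_1}$ while every other term is $O(e^{-(d_{i_l} - d_{i_1})t}) \to 0$; since $C_{i_1} > 0$ this gives $v_{i_1} = 0$. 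Peeling off one index at a time by multiplying by $e^{d_{i_k} t}$ after removing indices $i_1, \dots, i_{k-1}$, I conclude $v_{i_k} = 0$ for every $k$.

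The main work is the quantitative asymptotic $y_i(t) = C_i e^{-d_i t}(1 + o(1))$ with $C_i > 0$; everything else is an algebraic consequence of the distinct-degree hypothesis combined with the linear independence of the exponentials $e^{-d_i t}$. The uniform lower bounds built into $\mathcal{Q}_{m,M}^{(n)}$ and $\mathcal{P}_{m,M}^{(n)}$ are precisely what secure both the exponential decay of $y(t)$ (needed to make $I_i$ finite) and the strict positivity of the prefactor $C_i$ (needed to rule out accidental cancellation in the long-time tail).
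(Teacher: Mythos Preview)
Your proof is correct and follows essentially the same approach as the paper: both establish the exponential asymptotics $1-z_i(t) \asymp e^{-d_i t}$ via the same integrated ODE for $\log(1-z_i)$, use analyticity to extend $\langle v, z(t)\rangle=0$ to all $t\geq 0$, deduce $\langle v,\mathds{1}\rangle=0$ from the limit, and then peel off the coefficients $v_i$ one at a time using the distinct decay rates. The only cosmetic differences are that you extract an explicit positive prefactor $C_i$ (the paper is content with $e^{-d_i t+O(1)}$) and you normalize by multiplying through by $e^{d_{i_k}t}$ rather than dividing by the tail sum $\zeta_k(t)$.
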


\begin{remark}
If $\pi$ is uniform and $W$ represents a simple unweighted graph, then there exist two equal degrees as the possible degrees are $0,1,\dots,n-1$, but $0$ and $n-1$ cannot be present simultaneously.
\end{remark}

\begin{remark}
For any fixed $q \in \mathcal{Q}_{m,M}^{(n)}$ with $\gamma=0$ and $ \forall i \in V, \pi_i>0$, the degrees are all different for almost every $W$, because $d_i=d_j$ ($i \neq j$) gives a lower dimensional subspace for $W$.
\end{remark}

Lemma 6 in \cite{Beaufort2022} presents a result similar to Theorem \ref{t:different_d}, with some differences. \cite{Beaufort2022} studies the SIR process with diffusion and takes finitely many observations as opposed to full knowledge of the process over $[0,T_0].$ Most importantly though, the proof in \cite{Beaufort2022} relies on looking only at the total number of individuals at a given location, which only depends on the diffusion process but not on the epidemic dynamics, so the result is more about how to reconstruct the diffusion coefficients from the trajectories of the diffusion process rather than from the epidemic. The proof is also not applicable to the case when the diffusion is in equilibrium (even though that would be a natural assumption), while the result in the present paper is applicable for arbitrary initial conditions within $Q^{(n)}_{m,M}$.

\begin{remark}
    In the SI model, the degrees are observable in the $t\to\infty$ limit due to
    \begin{align*}
    \lim_{t \to \infty} -\frac{1}{t} \log \left(1- z_i(t) \right)=\lim_{t \to \infty} \frac{1}{t}\int_{0}^{t} \left( \mathcal{A}z(\tau)\right)_i \d \tau-\frac{1}{t} \log \left(1- z_{i}(0) \right)= \left( \mathcal{A} \mathds{1} \right)_i=d_i.
    \end{align*}

    The relevance of this identity is that in \cite{reconstruction_mieghem2022}, one way to show that $\hat{W}$ significantly differs from $W$ was to compare their degree distribution. However -- at least for the SI case -- if the error between $z(t)$ and $\hat{z}(t)$ converges to $0$, then the degree distribution of $\hat W$ also converges to the degree distribution of $W$.
\end{remark}

While Theorem \ref{t:different_d} states that for any given $q$, almost every $W$ is $q$-reconstructible (at least in the SI case), it does not imply that the reconstruction problem is robust. Indeed, if we take a sequence $p_N \to p$ where $p=(W,q)$ is such that $W$ is non-$q$-reconstructible, then \eqref{eq:A_explicit} becomes more and more ill-conditioned as $N \to \infty$.

The reconstruction problem becomes harder as we let $n \to \infty$. We introduce a further definition: a special case of block regular parameters are block homogeneous ones.
\begin{definition}
$p \in \mathcal{P}_{m,M}^{(n)}$ is \emph{block homogeneous} if it is block regular and it also satisfies
\begin{align*}
  \forall 1 \leq k, l \leq K, i \in V_k, j \in V_l \ w_{ij}=\bar{w}_{kl},
\end{align*}
and $m \leq \bar{w}_{kl} \leq M \,\forall 1 \leq k, l \leq K.$
\end{definition}

Szemer\'edi's regularity lemma essentially states that any $p \in \mathcal{P}^{(n)}_{m,M}$ can be approximated by some block homogeneous $\hat{p} \in \mathcal{P}_{m,M}^{(n)}$ corresponding to some lower dimensional parameter $\bar{p} \in \mathcal{P}_{m,M}^{(K)}$ where $K \ll n.$

\begin{theorem}
\label{t:reconstruction_bad}
 For every $\varepsilon>0$ there is an $n_0$ such that for all $n \geq n_0$ and $p=(W,q) \in \mathcal{P}_{m,M}^{(n)}$ with $\pi_i=\frac{1}{n}$ there is a $\hat{W}$ such that
 \begin{align*}
   &\|W-\hat{W} \|_{\square} \geq  \frac{1}{16}m>0  \\
   & \sup_{0 \leq t \leq T} \|z(t)-\hat{z}(t)\|_{1, \pi}<\varepsilon.
 \end{align*}
\end{theorem}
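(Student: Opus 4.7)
The overall strategy is to exploit Szemerédi's weak regularity lemma to compress $W$ to a block-constant matrix $W^*$ on $K = K(\varepsilon')$ blocks, so that only the block-level information drives the trajectory. I then construct $\hat W := W + D$ where $D$ is a symmetric perturbation with $\sum_{j \in V_l} D_{ij} \pi_j = 0$ for every $i$ and every block $V_l$ (so $D\Pi$ annihilates block-constant vectors), while $\|D\|_\square$ is of order $m$. Because $z(t)$ is close to a block-constant trajectory $z^*(t)$, the quantity $(D\Pi) z(t)$ stays small along the whole orbit; hence $\hat z(t)$ stays close to $z(t)$ via Lemma \ref{l:H1}, while $\|W - \hat W\|_\square = \|D\|_\square$ is bounded below by construction.

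First, given $\varepsilon' > 0$ (to be optimized at the end), apply the weak regularity lemma to $W$, simultaneously refining so that $z(0)$ and $\gamma$ are also approximately block-constant, to obtain an equitable partition $V = V_1 \sqcup \cdots \sqcup V_K$ with $\|W - W^*\|_\square \leq \varepsilon'$ and $\|z(0) - z^*(0)\|_{1, \pi} + \|\gamma - \gamma^*\|_{1, \pi} \leq \varepsilon'$, where starred quantities denote block averages. Let $z^*(t)$ be the trajectory under the block-regular parameter $p^* = (W^*, z^*(0), \gamma^*, \pi)$, which is block-constant. The elementary inequality $\|(W - W^*) \Pi v\|_{1, \pi} \leq 2 \|W - W^*\|_\square$ for $v \in [0,1]^n$ --- proved by writing $v_j = \int_0^1 \mathds{1}_{\{v_j > s\}}\, ds$ and splitting the sign of each output coordinate into two indicator functions --- combined with Lemma \ref{l:H1} yields
\begin{align*}
\sup_{0 \leq t \leq T} \|z(t) - z^*(t)\|_{1, \pi} \leq C(T, M)\, \varepsilon'.
\end{align*}

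Next, split $V = V^+ \sqcup V^-$ so that each block $V_k$ is halved evenly (possible up to one vertex per block for $n$ large relative to $K$) and set
\begin{align*}
D_{ij} := \begin{cases} +\delta & \text{if } i, j \text{ lie on the same side,} \\ -\delta & \text{otherwise,} \end{cases}
\end{align*}
with $\delta := m/4$ and $\hat W := W + D$. Then $D$ is symmetric, $\sum_{j \in V_l} D_{ij} \pi_j = 0$ for all $i, l$ by the even halving, and the choice $S = S' = V^+$ in the cut-norm definition gives $\|D\|_\square \geq \delta/4 = m/16$. Since $D\Pi$ kills block-constant vectors, we have $(D\Pi) z(t) = (D\Pi)(z(t) - z^*(t))$, and $|D_{ij}| = \delta$ forces $\|(D\Pi) z(t)\|_{1, \pi} \leq \delta \|z(t) - z^*(t)\|_{1, \pi}$. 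Integrating and applying Lemma \ref{l:H1} with $\hat q = q$ (so the initial-condition and curing-rate contributions vanish) produces
\begin{align*}
\sup_{0 \leq t \leq T} \|z(t) - \hat z(t)\|_{1, \pi} \leq e^{3MT}\, \delta T \cdot C(T, M)\, \varepsilon';
\end{align*}
choosing $\varepsilon'$ small enough forces the right-hand side below $\varepsilon$, and $n_0$ is dictated by the requirement that the $K$-part equitable partition admit a compatible halving $V^\pm$.

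The main technical obstacle is the first step: showing that cut-norm closeness of $W$ and $W^*$ is enough to keep $z$ and $z^*$ close across the whole interval $[0, T]$, rather than requiring an entry-wise or $L^2$ bound. This is exactly where the shape of the SIS right-hand side matters: the nonlinearity is bilinear in $z$ and $\mathcal{A}$, and $z(t) \in [0,1]^n$, which is precisely the structure under which cut-norm bounds on $W - W^*$ convert into $L^1$ bounds on $\mathcal{A}z - \mathcal{A}^* z$. The remaining items (simultaneous regularity for $W$, $z(0)$, $\gamma$, compatibility of the halving $V^\pm$ with the partition, and the slight enlargement of the entry range to $[3m/4, M + m/4]$ for $\hat W$) are routine and only affect the numerical value of $n_0$.
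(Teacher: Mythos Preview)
Your argument is correct and rests on the same skeleton as the paper's: apply Szemer\'edi regularity to reduce $p$ to a block-homogeneous parameter, then build $\hat W$ by a perturbation that preserves all block-row-sums (hence is invisible to block-constant trajectories) yet has cut norm of order $m$. The executions differ. The paper modifies the block-averaged matrix $W'$ multiplicatively---doubling the entries within each half of a parity split of the blocks and zeroing those across halves---so that $(\hat W,q')$ produces \emph{exactly} the same trajectory as $(W',q')$; the final estimate is then assembled by a three-hop triangle inequality $z\to z'=z''\to \hat z$, each hop controlled by cut-norm closeness via Remark~\ref{eq:cut_norm_conv}. You instead add a fixed-amplitude $\pm\delta$ matrix $D$ directly to $W$ and invoke Lemma~\ref{l:H1} once: since $D\Pi$ (approximately) annihilates block-constant vectors and $z(t)$ is $L^1$-close to the block-constant $z^*(t)$, the term $H_{1,T}=\int_0^T\|D\Pi z(t)\|_{1,\pi}\,\mathrm{d}t$ is small. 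Your route is slightly more streamlined and keeps $\hat W$ in the narrower range $[3m/4,\,M+m/4]$, whereas the paper's $\hat W$ lies in $[0,2M]$. The one point worth making fully explicit in a write-up is the $O(\delta K/n)$ residual from odd block sizes (so that $D\Pi$ does not kill block-constant vectors exactly); you flag this, and it is indeed harmlessly absorbed into the choice of $n_0$.
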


\begin{remark}
The condition $\pi_i= \frac{1}{n}$ is mainly for convenience. However, some regularity about the population size must be assumed. As a counterexample, consider $\pi_{1}=1-\frac{1}{n-1}$ and $\pi_{i}=\frac{1}{(n-1)^2}$ for $i>1$. In this setting, the problem is basically reduced to a weighted graph with one vertex where the only relevant parameter from $W$ is $w_{11}$ corresponding to the infection rate of the main block. 
\end{remark}

\subsection{Results for the prediction problem}
\label{s:results_prediction}

 Let $v \in \mathbb{R}^n$ be arbitrary and $u \in \mathcal{N}^{\intercal}$ be its orthogonal projection. Clearly, $\| v\|_{G(T)}=\|u \|_{G(T)}$ for all $T>0.$ Notice that $\mathcal{N}^{\intercal}$ is a finite dimensional subspace on which both $\| \cdot \|_{G(T)}$ and  $\| \cdot \|_{G(T_0)}$ are norms. Since on finite dimensional spaces all norms are equivalent, there is a constant $\Lambda^{(n)}(p)$ such that
 \begin{align*}
\|u \|_{G(T)} \leq \Lambda^{(n)}(p) \| u\|_{G(T_0)}\quad \forall u \in \mathcal{N}^{\intercal},
 \end{align*}
implying
\begin{align}
\label{eq:Lambda}
 H_{2,T} \leq \Lambda^{(n)}(p)  H_{2,T_0}  
\end{align}
based on \eqref{eq:H2_noice}. (Note that $\Lambda^{(n)}(p)$ also depends on $T_0,T,m,M$  and $\pi$.)

This means that for any \emph{fixed} $p \in \mathcal{P}_{m,M}^{(n)}$ prediction is possible, and we also have $H_{2,T}=O \left( H_{2,T_0} \right).$

However, \eqref{eq:Lambda} is weak in the sense that the dependence on $p$ is not specified. For example, the previous argument does not exclude the possibility that for some sequence $p_N \to p^*$ ($N \to \infty$) we have $\Lambda^{(n)}\left(p_N\right) \to \infty $ even though $\Lambda^{(n)}(p^*)<\infty$. ($\Lambda^{(n)}(p)$ is not continuous in $p$.)

We conjecture this not to be the case for the prediction problem.

\begin{conjecture}
\begin{align}
\label{eq:Lambda_bound1}
\Lambda^{(n)}:=&\sup_{p \in \mathcal{P}^{(n)}_{m,M}} \Lambda^{(n)}(p)<\infty \\
\label{eq:Lambda_bound2}
\Lambda:=& \sup_{n} \Lambda^{(n)}<\infty
\end{align}    
\end{conjecture}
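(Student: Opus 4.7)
The plan is a two-stage argument: first prove \eqref{eq:Lambda_bound1} at fixed $n$ by compactness of the parameter space, then lift to the graphon setting of Section~\ref{s:graphon} to obtain the uniformity \eqref{eq:Lambda_bound2}, using the cut-norm compactness of the space of admissible graphons (via Szemer\'edi's weak regularity lemma, already central to this paper). For the fixed-$n$ claim, argue by contradiction: suppose $\Lambda^{(n)}(p_k)\to\infty$ with $p_k\in\mathcal{P}^{(n)}_{m,M}$. Since $\mathcal{P}^{(n)}_{m,M}$ is compact in the Euclidean topology and standard ODE theory gives continuous dependence $z_{p_k}\to z_{p^*}$ uniformly on $[0,T]$, pass to a subsequence with $p_k\to p^*$ and choose $u_k\in\mathcal{N}(p_k)^\perp$ with $\|u_k\|_{G(T_0)}(p_k)=1$ so that $\|u_k\|_{G(T)}(p_k)=\Lambda^{(n)}(p_k)\to\infty$. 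If the Euclidean norms $\|u_k\|_2$ stay bounded, pass to a subsequential limit $u^*$; continuity forces $\|u^*\|_{G(T)}(p^*)=\infty$, contradicting the trivial upper bound $\|u^*\|_{G(T)}^2\le T n\|u^*\|_2^2<\infty$.

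The serious case is $\|u_k\|_2\to\infty$. Rescaling $v_k:=u_k/\|u_k\|_2$ and extracting a limit $v^{(1)}$, continuity combined with Lemma~\ref{l:analytic} forces $v^{(1)}\in\mathcal{N}(p^*)$, so the nullspace has discontinuously grown in the limit. Handle this by iterative peeling: decompose $u_k=\alpha_k^{(1)}v^{(1)}+w_k^{(1)}$ with $w_k^{(1)}\perp v^{(1)}$ and repeat the argument on $w_k^{(1)}$, extracting a further limiting direction $v^{(2)}\in\mathcal{N}(p^*)\cap(v^{(1)})^\perp$, and so on. Each iteration strictly reduces the available dimension, so after at most $n$ steps either a residual stays bounded --- reducing to the easy case and yielding a contradiction --- or the whole of $\mathcal{N}(p^*)$ has been exhausted, which again yields a contradiction. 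The quantitative control that makes the peeling rigorous comes from uniform Cauchy bounds on the analytic extension of $z_p(t)$ to a complex strip of width $\delta(M)>0$; such an extension exists with $p$-independent constants because $F^\mathcal{A}$ is a polynomial and the solution lies in $[0,1]^n$. These Cauchy bounds pin down the decay rate of the Taylor coefficients of $t\mapsto\langle u_k,z_{p_k}(t)\rangle$ and so control the rate at which the two energies approach zero.

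For \eqref{eq:Lambda_bound2} the same scheme is lifted to graphons. Each $p\in\mathcal{P}^{(n)}_{m,M}$ is represented as a graphon-valued parameter $\mathfrak{p}$, the parameter space is cut-norm compact by Lov\'asz--Szegedy, and the SIS dynamics extend to a graphon equation with solutions $\mathfrak{z}_\mathfrak{p}(t,\cdot)$ depending continuously on $\mathfrak{p}$ in an appropriate $L^p$ topology. The Part~1 contradiction argument applies in this setting, giving $\Lambda^{(\infty)}:=\sup_\mathfrak{p}\Lambda^{(\infty)}(\mathfrak{p})<\infty$; since every finite-$n$ instance embeds isometrically as a graphon with identical trajectories, $\sup_n\Lambda^{(n)}\le\Lambda^{(\infty)}$.

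The main obstacle throughout is making the peeling quantitative: one must exclude pathological ``almost-nullspace'' directions of $G(T_0)(p_k)$ whose contribution to $G(T)(p_k)$ is disproportionately large relative to the rate at which $p_k\to p^*$. This amounts to establishing a uniform analytic lower bound on $G(T_0)(p)$ restricted to the complement of a neighborhood of $\mathcal{N}(p^*)$, and extracting it from the ODE structure is the technical heart of the argument. In the graphon step there is an additional difficulty: cut-norm convergence is weaker than $L^2$, so proving continuity of $\mathfrak{p}\mapsto\mathfrak{z}_\mathfrak{p}$ strong enough for the energy ratios to pass to the limit requires a careful use of the specific polynomial structure of the SIS equation rather than a soft functional-analytic argument.
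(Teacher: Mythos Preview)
The statement you are attempting to prove is a \emph{conjecture} in the paper, not a theorem: the authors explicitly leave both \eqref{eq:Lambda_bound1} and \eqref{eq:Lambda_bound2} open and instead prove the weaker Theorem~\ref{t:uniform_prediction}. Their route to that weaker result is quite different from yours: rather than trying to control the bare ratio $\|f\|_{G_T}^2/\|f\|_{G_{T_0}}^2$, they regularize it to $\|f\|_{G_T}^2/(\|f\|_{G_{T_0}}^2+\tfrac{1}{N})$, show (Lemma~\ref{l:chi}) via cut-norm compactness that $\tfrac{1}{N}\lambda_N\to 0$ uniformly over $\mathcal{P}_{0,M}$, and then trade the regularization back for an $\varepsilon$--$\delta$ statement. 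This sidesteps exactly the discontinuity of $p\mapsto\Lambda^{(n)}(p)$ that you are trying to confront head-on.

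Your proposal is a plan, not a proof, and you candidly flag the gap yourself: the peeling step needs a \emph{quantitative} lower bound on $G(T_0)(p_k)$ restricted to directions away from $\mathcal{N}(p^*)$, and you do not supply one. The Cauchy-bound idea (uniform analytic extension of $z_p$ to a strip of width $\delta(M)$) is plausible and would give uniform decay of Taylor coefficients of $t\mapsto\langle u_k,z_{p_k}(t)\rangle$, but turning that into a uniform bound on the ratio $\|u\|_{G(T)}/\|u\|_{G(T_0)}$ is a nontrivial quantitative unique-continuation statement that you have not established. In the graphon step the difficulty compounds: the finite-dimensional termination that makes peeling a viable strategy in Part~1 is simply unavailable in $L^2([0,1])$, so ``the Part~1 contradiction argument applies in this setting'' is not justified --- you would need a genuinely different mechanism (e.g.\ compactness of $\mathbb{G}_T$ as an operator plus a spectral argument), and that mechanism is not sketched. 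As it stands, the proposal identifies the right obstacles but does not overcome them; the authors' decision to leave this as a conjecture and prove only the regularized version reflects precisely these difficulties.
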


A slightly weaker yet still surprisingly strong statement show, however, that for $H_{2,T_0}<\delta \Rightarrow H_{2,T}<\varepsilon$ we can choose $\delta$ independently of both $\mathcal{P}_{0,M}^{(n)}$ and $n$. This means the prediction problem can be solved very robustly even for large dimension.

\begin{theorem}
\label{t:uniform_prediction}

Assuming $ 0 \leq \hat{w}_{ij} \leq M$ ($i,j \in V$) we have that
\begin{align}
 \forall \varepsilon>0 \ \exists \delta>0 \ \forall  p \in  \bigcup_{n \in \mathbb{N^{+}}}\mathcal{P}_{0,M}^{(n)} \ H_{2,T_0}<\delta \Rightarrow H_{2,T}<\varepsilon .
\end{align}
\end{theorem}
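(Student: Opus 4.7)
The plan is to argue by contradiction, extracting a convergent subsequence via a normal-families argument enabled by a uniform-in-$n$ complex analyticity estimate for the relevant trajectory. Assume the theorem fails: there exist $\varepsilon_0>0$, a sequence $p_N=(W_N,q_N)\in\mathcal{P}_{0,M}^{(n_N)}$, and matrices $\hat W_N$ with $0\le (\hat w_N)_{ij}\le M$, such that $H_{2,T_0}\to 0$ while $H_{2,T}\ge\varepsilon_0$ along the sequence. Set
\[
g_N(t):=\|(W_N-\hat W_N)\Pi_N z_N(t)\|_{2,\pi_N}^2,
\]
so that $H_{2,t}^2=\int_0^t g_N(s)\,\d s$, and observe that each $g_N$ is real analytic on $[0,\infty)$ because $z_N$ is.

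The crucial step is to show that each $g_N$ extends holomorphically to a \emph{fixed} complex tube $U:=\{s\in\mathbb{C}:\operatorname{dist}(s,[0,T])<r_0\}$ with a uniform bound $|g_N|\le 16M^2$ on $U$, where $r_0>0$ depends only on $M$ (and not on $n_N$). Indeed, the right-hand side of \eqref{eq:NIMFA_A} is a quadratic polynomial in $z$ whose coefficients are controlled by $M$, and $\sum_j\pi_j^{(N)}=1$ with $z_N(t)\in[0,1]^{n_N}$ on real time; a standard Picard/Cauchy--Kovalevskaya estimate in the $\ell^\infty$ norm then extends $z_N$ analytically to such a tube with $\|z_N(s)\|_\infty\le 2$ for $s\in U$, with $r_0$ depending only on $M$ and $T$. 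Each component $((W_N-\hat W_N)\Pi_N z_N(s))_i$ is then bounded by $2M\sum_j\pi_j^{(N)}\cdot 2 = 4M$ on $U$, giving $|g_N|\le 16M^2$ on $U$.

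With this uniform holomorphic bound in hand, Cauchy's estimate yields $\sup_{[0,T]}|g_N'|\le 16M^2/r_0$, so the family $\{g_N\}$ is bounded and equicontinuous on $[0,T]$. Arzel\`a--Ascoli provides a uniformly convergent subsequence $g_N\to g_\infty$ on $[0,T]$; applying Montel's theorem on $U$ and passing to a further subsequence, $g_\infty$ extends to a holomorphic function on $U$ and is therefore real analytic on $[0,T]$. Uniform convergence together with $g_N\ge 0$ and $\int_0^{T_0}g_N\to 0$ force $g_\infty\equiv 0$ on $[0,T_0]$, so Lemma~\ref{l:analytic} gives $g_\infty\equiv 0$ on $[0,T]$. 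But then $\int_0^T g_N\to\int_0^T g_\infty=0$, contradicting $\int_0^T g_N\ge\varepsilon_0^2$.

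The main obstacle is the uniform analyticity step: one must verify that the complex-time Picard iteration for \eqref{eq:NIMFA_A} gives a radius of analyticity bounded below independently of $n_N$, with a uniform bound on the solution. This is precisely where $\sum_j\pi_j=1$ and the uniform $M$-bound pay off, since the quadratic nonlinearity $(1-z_i)(\mathcal{A}z)_i$ can be controlled in the $\ell^\infty$ norm using only these data, regardless of dimension. Once that estimate is in place, the rest is standard normal-families packaging combined with Lemma~\ref{l:analytic}.
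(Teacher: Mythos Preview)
Your argument is correct and takes a genuinely different route from the paper's proof. The paper embeds everything into the graphon space $\mathcal{P}_{0,M}$, uses compactness of $\tilde{\mathcal{P}}_{0,M}$ under the cut distance (via the weak regularity lemma), and shows that the auxiliary quantity $\chi_N=\sup_p\frac{1}{N}\lambda_N(p)$ tends to~$0$; the inequality $H_{2,T}^2\le NT\,H_{2,T_0}^2+M^2\chi_N$ then finishes. Your approach instead exploits that the scalar function $g_N$ has a holomorphic extension to a fixed complex tube with a uniform bound, both independent of $n$; this is exactly where $\sum_j\pi_j=1$ and the polynomial structure of \eqref{eq:NIMFA_A} enter, giving dimension-free $\ell^\infty$ bounds on the vector field and its Lipschitz constant. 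Montel/Arzel\`a--Ascoli and the identity theorem then replace the graphon compactness. Your route is more elementary in that it avoids the graphon machinery entirely; the paper's route, on the other hand, yields the graphon-level statement (their Theorem~\ref{t:uniform_prediction_stronger}) essentially for free and ties the result into the structural picture used for Theorem~\ref{t:reconstruction_bad}.

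Two small points worth tightening. First, when you invoke Lemma~\ref{l:analytic}, note that $g_\infty$ is only defined on the tube $U$, not on $[0,\infty)$; what you actually need is the identity theorem for holomorphic functions on the connected open set $U$, which gives $g_\infty\equiv 0$ on $U\supset[0,T]$ from its vanishing on $[0,T_0]$. Second, you should spell out why the local holomorphic extensions around each $t_0\in[0,T]$ patch to a single-valued function on $U$: this follows from uniqueness of analytic ODE solutions, since any two extensions agree with the real trajectory on the real segment in their overlap and hence agree on the overlap by the identity theorem. With these clarifications your proof is complete.
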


\subsection{Discrete time and measurement errors}
\label{s:discretemeasure}

So far, we assumed that $z(t)$ is available for all $0\leq t\leq T_0$ with arbitrary precision. In real-life scenarios, this is typically not the case; the values of $z(t)$ are only available at certain measurement points, and the $z(t_j)$ values may also have measurement errors. This section addresses the prediction error resulting from only knowing $z(t)$ at the discrete measurement points, and also the error resulting from measurement error.

The measurement points are denoted by
$0=t_0 < t_1 < \dots < t_J = T_0$, and the measurement values are $z^*(t_i), i=0,\dots, J.$ Each $z^*(t_i)$ is a vector of length $n$. Technically, $z^*(\cdot)$ is only defined at the $t_0,t_1,\dots,t_J$ points, but we will consider it extended to $[0,T_0]$ as a piecewise constant function between the measurement points. $\gamma$ is also measured with error; the corresponding measurement is denoted by $\gamma^*\in \mathbb{R}^n$. Since it is convenient to use the $L^2(V,\pi)$ norm, for the sake of simplicity, we do not account for the the measurement error of $\pi$.

We introduce the error terms
\begin{align*}
\Delta_1&:= \|z(0)-z^*(0)\|_{\pi,2},\\
\Delta_2&:= \|\gamma-\gamma^*\|_{\pi,2},\\
\Delta_3&:=  \max_{1\leq i\leq J}\frac{1}{t_i-t_{i-1}}
\|(z(t_i)-z(t_{i-1}))-(z^*(t_i)-z^*(t_{i-1}))\|_{\pi,2},\\
\Delta_4&:= \max_{1\leq i\leq J} (t_i-t_{i-1}).
\end{align*} 

$\Delta_3$ essentially corresponds to the measurement error of $\frac{\d z(t)}{\d t}$; it does not necessarily decrease if we increase the number of measurements, 
creating an interesting tension with $\Delta_4$.

To keep the derivations shorter, discrete measurement points and measurement error will be handled simultaneously:
$$\Delta:=\Delta_1+\Delta_2+\Delta_3+\Delta_4,$$
and we assume that we can make measurements frequently and accurately enough to make $\Delta$ small.

We assume $p \in \mathcal{P}_{m,M}^{(n)}$. In subsequent notation, constants only depending on $m,M$ and $T_0$ will be included in $O(\cdot)$.

\begin{lemma}
\label{l:ztpi2s}
\begin{align}
\|z(t)-z^*(t)\|_{\pi,2}=O(\Delta).
\end{align}
\end{lemma}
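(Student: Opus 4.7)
The plan is to compare $z(t)$ with $z^*(t)$ by first comparing $z(t)$ with its value at the nearest past measurement point, and then comparing the measurement-point values $z(t_i)$ with the measured values $z^*(t_i)$. The first part uses Lipschitz continuity of the true trajectory, while the second uses a telescoping argument built from $\Delta_1$ and $\Delta_3$. Note that $\Delta_2$ will not actually enter this bound, since $z^*(t)$ is defined only by the measurements of $z$, not by the measurement of $\gamma$; it is grouped into $\Delta$ for later convenience.

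First I would establish a Lipschitz estimate for $z$. From \eqref{eq:NIMFA_A} and the fact that $z_i(t)\in[0,1]$, $\gamma_i\le M$, and $(\mathcal{A}z(t))_i\le M\sum_j\pi_j=M$, one gets
\begin{align*}
\left|\tfrac{\d}{\d t}z_i(t)\right|\le 2M,\qquad\text{hence}\qquad \|z(t)-z(s)\|_{\pi,2}\le 2M|t-s|.
\end{align*}
In particular, for $t\in[t_{i-1},t_i]$ we get $\|z(t)-z(t_{i-1})\|_{\pi,2}\le 2M\Delta_4=O(\Delta)$.

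Next I would bound the discrepancy at the measurement points. Write
\begin{align*}
z(t_i)-z^*(t_i)=\bigl(z(t_{i-1})-z^*(t_{i-1})\bigr)+\bigl[(z(t_i)-z(t_{i-1}))-(z^*(t_i)-z^*(t_{i-1}))\bigr],
\end{align*}
and denote $e_i:=\|z(t_i)-z^*(t_i)\|_{\pi,2}$. By the definition of $\Delta_3$, the bracketed difference has $\|\cdot\|_{\pi,2}$-norm at most $\Delta_3(t_i-t_{i-1})$, so $e_i\le e_{i-1}+\Delta_3(t_i-t_{i-1})$. Iterating and using $e_0=\Delta_1$ and $t_J-t_0=T_0$ yields
\begin{align*}
e_i\le \Delta_1+\Delta_3\sum_{j=1}^{i}(t_j-t_{j-1})\le \Delta_1+T_0\Delta_3=O(\Delta).
\end{align*}

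Finally, for any $t\in[t_{i-1},t_i]$, since $z^*(t)=z^*(t_{i-1})$ by the piecewise-constant extension, the triangle inequality gives
\begin{align*}
\|z(t)-z^*(t)\|_{\pi,2}\le \|z(t)-z(t_{i-1})\|_{\pi,2}+\|z(t_{i-1})-z^*(t_{i-1})\|_{\pi,2}\le 2M\Delta_4+e_{i-1}=O(\Delta),
\end{align*}
which is the claim. There is no real obstacle here; the only mildly delicate point is noting that the definition of $\Delta_3$ is stated in a rescaled form (dividing by $t_i-t_{i-1}$), so the sum telescopes cleanly to $T_0\Delta_3$ rather than producing a factor that depends on $J$.
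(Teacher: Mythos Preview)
Your proof is correct and follows essentially the same approach as the paper: both arguments telescope the measurement-point discrepancies to get $e_i\le\Delta_1+T_0\Delta_3$, and then use the Lipschitz continuity of $z$ (implicit in the paper's $O(\Delta_4)$ term) to pass from a measurement point to an arbitrary $t$. Your version is slightly more explicit in stating and using the Lipschitz bound $|\tfrac{\d}{\d t}z_i|\le 2M$, but the structure is identical.
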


For $p \in \mathcal{P}_{m,M}^{(n)}$ we have
$$1-z_j(t)\leq me^{-MT_0} \quad \forall t\in [0,T_0];$$
along with Lemma \ref{l:ztpi2s}, this implies that there exists a $c_0(m,M,T_0)$ such that if $\Delta\leq c_0(m,M,T_0)$,
$$1-z_j^*(t)\geq m_0 \quad\forall t\in [0,T_0].$$ 
In the rest of this section we assume $\Delta\leq c_0(m,M,T_0)$, so $\frac{1}{1-z_j^*(t)}=O(1).$

Motivated by \eqref{eq:A_observe}, we define 
\begin{align}
\alpha_j(t_i):=\frac{\frac{z_j^*(t_i)-z_j^*(t_{i-1})}{t_i-t_{i-1}}+\gamma_j^*z_j^*(t_i)}{1-z_j^*(t_i)}
\end{align}
and
\begin{align}
\label{eq:H2sdef}
(H_{2,T_0}^*(\hat W))^2:=
\sum_{i=1}^J\sum_{j=1}^n |\alpha_j(t_i)-(\hat{\mathcal{A}}z^*(t_i))_j|^2\pi_j (t_i-t_{i-1})
\end{align}

The main result of this section is the following:
\begin{theorem}
Assume $0 \leq w_{ij}, \hat{w}_{ij} \leq M$. Then we have that

\label{t:numerror}
\begin{align}
\label{eq:numerror}
(H_{2,T_0}^*(\hat W))^2=H_{2,T_0}^2(\hat W)+O(\Delta).
\end{align}
\end{theorem}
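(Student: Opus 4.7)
The idea is a two-stage comparison: first replace the integral defining $H_{2,T_0}^2(\hat W)$ by the Riemann sum $\sum_i (t_i-t_{i-1})\|E(t_i)\|_{\pi,2}^2$ evaluated on the \emph{exact} integrand $E(t):=(\mathcal{A}-\hat{\mathcal{A}})z(t)$, and then compare that noise-free Riemann sum to $(H_{2,T_0}^*(\hat W))^2=\sum_i(t_i-t_{i-1})\|E^*(t_i)\|_{\pi,2}^2$ with $E^*(t_i):=\alpha(t_i)-\hat{\mathcal{A}}z^*(t_i)$. Each of the two stages should contribute an $O(\Delta)$ error, and adding them up yields \eqref{eq:numerror}.

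The crucial pointwise input is the bound $\|E^*(t_i)-E(t_i)\|_{\pi,2}=O(\Delta)$, uniform in $i$. The summand $\|\hat{\mathcal{A}}z(t_i)-\hat{\mathcal{A}}z^*(t_i)\|_{\pi,2}$ is $O(\Delta)$ directly from $\hat w_{jk}\leq M$, Cauchy--Schwarz in $\pi$, and Lemma~\ref{l:ztpi2s}. For $\|\alpha(t_i)-\mathcal{A}z(t_i)\|_{\pi,2}$ I would rewrite both quantities via \eqref{eq:A_observe} and use the algebraic identity $\tfrac{A}{C}-\tfrac{B}{D}=\tfrac{(A-B)D+B(D-C)}{CD}$; here $CD\geq m_0^2$ by the assumption $\Delta\leq c_0(m,M,T_0)$, $D\leq 1$, and $|B|=(1-z_j)|(\mathcal{A}z)_j|\leq M$ from \eqref{eq:NIMFA}. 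The numerator difference $A-B$ then splits into three pieces whose $\pi$-norms are $O(\Delta)$: the deterministic finite-difference error $\frac{z_j(t_i)-z_j(t_{i-1})}{t_i-t_{i-1}}-z_j'(t_i)$, which is $O(\Delta_4)$ uniformly in $j$ since $z_j''$ is bounded by an $M$-dependent constant obtained by differentiating \eqref{eq:NIMFA}; the increment measurement error, which contributes exactly $\Delta_3$ in $\pi$-norm by definition; and $\gamma_j z_j-\gamma_j^*z_j^*=(\gamma_j-\gamma_j^*)z_j+\gamma_j^*(z_j-z_j^*)$, which is $O(\Delta_1+\Delta_2)$ in $\pi$-norm via Lemma~\ref{l:ztpi2s}. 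Assembling these through the ratio identity yields the claimed pointwise bound.

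To turn this into the global estimate, I would apply $\|E^*(t_i)\|_{\pi,2}^2-\|E(t_i)\|_{\pi,2}^2=\|E^*(t_i)-E(t_i)\|_{\pi,2}^2+2\langle E(t_i),E^*(t_i)-E(t_i)\rangle_\pi$, multiply by $(t_i-t_{i-1})$, and sum over $i$. The quadratic part is $O(\Delta^2)\cdot T_0=O(\Delta^2)$. The cross term is handled by Cauchy--Schwarz across $i$,
$$\Bigl|\sum_i (t_i-t_{i-1})\langle E(t_i),E^*(t_i)-E(t_i)\rangle_\pi\Bigr|\leq \Bigl(\sum_i (t_i-t_{i-1})\|E(t_i)\|_{\pi,2}^2\Bigr)^{1/2}\Bigl(\sum_i(t_i-t_{i-1})\|E^*(t_i)-E(t_i)\|_{\pi,2}^2\Bigr)^{1/2},$$
which is $O(1)\cdot O(\Delta)=O(\Delta)$ since $\|E(t_i)\|_{\pi,2}\leq M$ and $\sum_i(t_i-t_{i-1})=T_0$. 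For the Riemann-sum-to-integral step, $\|E(t)\|_{\pi,2}^2$ is Lipschitz in $t$ with an $M$-only constant (its derivative is $2\langle E,(\mathcal{A}-\hat{\mathcal{A}})z'\rangle_\pi$, bounded using $\|z'\|_\infty\leq 2M$ and $|w_{jk}|,|\hat w_{jk}|\leq M$), so the sum-to-integral error is $O(\Delta_4)=O(\Delta)$.

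The main obstacle is to keep every estimate uniform in $n$. Since $\Delta_3$ is itself a $\pi$-norm rather than a pointwise bound, in the ratio identity for $\alpha-\mathcal{A}z$ each factor must be controlled by a Minkowski-type inequality in $\pi$-norm, using $L^\infty$ bounds only for the quantities that are genuinely uniform in $j$ (namely $B$, $D$ and $(CD)^{-1}$). This is what keeps the constants hidden in $O(\Delta)$ independent of $n$, which is implicit in the statement of the theorem.
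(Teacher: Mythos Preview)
Your proposal is correct and follows essentially the same route as the paper: bound $\|\alpha(t_i)-\mathcal{A}z(t_i)\|_{\pi,2}$ and $\|\hat{\mathcal{A}}z(t_i)-\hat{\mathcal{A}}z^*(t_i)\|_{\pi,2}$ by $O(\Delta)$, expand the squared norm via $(x+\varepsilon)^2=x^2+2x\varepsilon+\varepsilon^2$ to pass from $(H_{2,T_0}^*)^2$ to the noise-free Riemann sum with an $O(\Delta)$ error, and then absorb the numerical-integration error as $O(\Delta_4)$. If anything, your write-up is more careful than the paper's in distinguishing the $\pi$-norm control coming from $\Delta_1,\Delta_3$ from the genuinely pointwise bounds, and in making the Lipschitz argument for the Riemann step explicit.
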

(The $O(\cdot)$ depends on $m,M,T_0$, but not on $n$.)

\section{Setup for graphons}
\label{s:graphon}
In scenarios where $n$ is allowed to vary, it is convenient to work in a state space that can incorporate weighted graphs of different sizes.

Let each vertex $i$ be represented by a subset $I_i \subseteq [0,1]$ with $\left| I_i \right|=\pi_i.$ such that $I_1, \dots I_n$ forms a partition of $[0,1]$. This way, we may lift the weighted graph $W= \left( w_{ij} \right)_{ij \in V}$ up to the space of symmetric (measurable) kernel functions $ m \leq W(x,y)=W(y,x) \leq M$ via
\begin{align*}
 W(x,y):=\sum_{i, j \in V}w_{ij} \1{x \in I_i} \1{y \in I_j },  
\end{align*}
or in other words, we set $W(x,y)=w_{ij}$ when $x \in I_i$ and $y\in I_j$.

Measurable kernels satisfying $0 \leq W(x,y) =W(y,x) \leq1$ (a.e. $x,y \in [0,1]$) are called graphons in the literature and they arise as the natural limit objects of dense graph sequences \cite{lovaszbook}. With a slight abuse of terminology, we will call any bounded, non-negative symmetric kernel graphons as well as they can always be rescaled.

The corresponding integral operator is defined as
\begin{align*}
  \W f(x):=\int_{0}^{1} W(x,y) f(y) \d y.  
\end{align*}

Similarly, the solution $(z_i(t))_{i \in V}$ is also lifted up to the space of (measurable) $\mathbb{R}_0^+ \times [0,1] \to [0,1] $ functions via
\begin{align*}
    u(t,x):=\sum_{i \in V}z_{i}(t) \1{x \in I_i},
\end{align*}
while the recovery rates $(\gamma_i)_{i \in V}$ become
\begin{align*}
\gamma(x):=\sum_{i \in V} \gamma_i \1{x \in I_i}.
\end{align*}

If $\left(z_i(t) \right)_{i \in V}$ is a solution to \eqref{eq:NIMFA}, then $u(t,x)$ is a solution to
\begin{align}
\label{eq:u}
\partial_tu(t,x)=(1-u(t,x)) \W u(t,x)-\gamma(x) u(t,x).
\end{align}

Define the extended parameter set to a set of measurable functions satisfying
\begin{align}
\begin{split}
\mathcal{P}_{m,M}:=& \big\{(W,u(0),\gamma) | m  \leq W(x,y) =W(y,x) \leq M,\\ & m \leq u(0,x) \leq 1-m,
0 \leq \gamma(x) \leq M, \textit{a.e } x,y \in [0,1]  \big\}.
\end{split}
\end{align}
For any parameter $p \in \mathcal{P}_{m,M}$, there is a solution $u(t)$ to \eqref{eq:u} defined for all $t \geq 0$ such that $0 \leq u(t,x) \leq 1$ (see \cite{SISdyn}).

We will call a parameter $p \in \mathcal{P}_{m,M}$ \emph{discrete} if it corresponds to some finite interval partition $I_1, \dots, I_n$ in which case $p$ can be identified as an element of $\mathcal{P}_{m,M}^{(n)}.$

The extension of \eqref{eq:H} for graphons becomes
\begin{align}
\label{eq:H_graphon}
H_{r,T}:= \left( \int_{0}^{T} \left \|Wu(t)-\hat{W}u(t) \right \|_r^r \d t \right)^{1/r},
\end{align}
which coincides with \eqref{eq:H} for the discrete case.

The cut norm of a graphon $W$ is defined as
\begin{align}
\label{eq:cut_norm_graphon}
\|W \|_{\square}:= \sup_{S,S' \subseteq [0,1]} \left|\int_{S} \int_{S'} W(x,y) \d x \d y \right|
\end{align}
where the supremum is taken over measurable sets.

\begin{remark}
   Assume $W$ is discrete and possibly negative. Let $S_i:=S \cap I_i, \ S_i':=S' \cap I_i$ and $x_i:=|S_i|, y_i:=|S_i'|$. Clearly, $0 \leq x_i,y_i \leq \pi_i.$
   
   Define
   \begin{align*}
      f(x,y):=x^{\intercal} Wy=\sum_{i,j} w_{ij}|S_i| |S'_j|=\int_{S} \int_{S'} W(x,y) \d x \d y. 
   \end{align*}
    Assume that $(x,y)$ either maximizes (or minimizes) $f$. Then $\frac{\partial f}{\partial x_i}= (Wy)_i=0$ for any $0<x_i<\pi_i$, implying
    \begin{align*}
        f(x,y)=x^{\intercal}Wy=\sum_{0<x_i<\pi_i} x_i (Wy)_i+\sum_{x_i=\pi_i}x_i (Wy)_i=\sum_{x_i=\pi_i}x_i (Wy)_i.
    \end{align*}
    This means that for any $i$ for which $0<x_i<\pi_i$, the same maximum (or minimum) value of $f$ can be obtained by setting $x_i=0$. Thus, without loss of generality, we may assume that $\forall i$ either $S_i=I_i$ or $S_i= \emptyset.$

    The same argument applies for $S'$ as well, hence, in the discrete case, \eqref{eq:cut_norm_graphon} and \eqref{eq:cut_norm_finite} are consistent.
\end{remark}

We extend the notion of the cut norm to the parameter set $\mathcal{P}_{m,M}$ as
\begin{align}
\label{eq:cut_norm_p}
\| p \|_{\square}:= \|W \|_{\square}+ \|u(0) \|_1+ \|\gamma \|_1.
\end{align}
Note that for $u(0)$ and $\gamma$, the $L^1([0,1])$ norm is used; in 1-dimension, it's equivalent to the 
cut norm.

A slight generalization of Lemma \ref{l:H1} yields
\begin{lemma}
\label{l:H1_graphon}
\begin{align*}
\sup_{0 \leq t \leq T} \|u(t)-\hat{u}(t) \|_1 \leq \left(\|u(0)-\hat{u}(0) \|_1+T\|\gamma-\hat{\gamma} \|_1+H_{1,T} \right)e^{3MT}.
\end{align*}
  
\end{lemma}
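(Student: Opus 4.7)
\medskip

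\noindent\textbf{Proof plan for Lemma \ref{l:H1_graphon}.}

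The plan is a standard Grönwall argument applied to $e(t,x) := u(t,x) - \hat u(t,x)$. First I would write out the evolution equation for $e$ by subtracting the two copies of \eqref{eq:u} and splitting each nonlinear difference so that one copy of $u$ or $\hat u$ is frozen at a time. Concretely, using the algebraic identities
\begin{align*}
(1-u)\W u - (1-\hat u)\hat{\W}\hat u
 &= (1-u)(\W-\hat{\W})u + (1-u)\hat{\W}e - e\,\hat{\W}\hat u,\\
\gamma u - \hat\gamma \hat u &= (\gamma-\hat\gamma)u + \hat\gamma\, e,
\end{align*}
one obtains
\begin{align*}
\partial_t e = (1-u)(\W-\hat{\W})u + (1-u)\hat{\W}e - e\,\hat{\W}\hat u - (\gamma-\hat\gamma)u - \hat\gamma\, e.
\end{align*}

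Next I would pass to $\|e(t)\|_1$ by using $\tfrac{d}{dt}|e(t,x)| = \operatorname{sgn}(e(t,x))\partial_t e(t,x)$ pointwise (this is the usual absolute-value trick; the set $\{e=0\}$ can be handled by a standard mollification or by noting $u,\hat u$ are analytic in $t$, so only finitely many sign changes occur at a.e.\ $x$). Integrating in $x$, the five terms are bounded using $0\leq u,\hat u\leq 1$ and $0\leq W,\hat W\leq M$: the first term contributes $\|\W u-\hat{\W}u\|_1$, terms two, three and five each contribute at most $M\|e(t)\|_1$, and the fourth contributes $\|\gamma-\hat\gamma\|_1$. This yields
\begin{align*}
\frac{d}{dt}\|e(t)\|_1 \leq \|\W u(t)-\hat{\W}u(t)\|_1 + \|\gamma-\hat\gamma\|_1 + 3M\,\|e(t)\|_1.
\end{align*}

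Finally, integrating from $0$ to $t\leq T$, estimating $\int_0^t \|\W u(s)-\hat{\W}u(s)\|_1\,ds \leq H_{1,T}$ and $\int_0^t \|\gamma-\hat\gamma\|_1\,ds \leq T\|\gamma-\hat\gamma\|_1$, and then applying Grönwall's inequality with constant $3M$, produces
\begin{align*}
\|e(t)\|_1 \leq \bigl(\|u(0)-\hat u(0)\|_1 + T\|\gamma-\hat\gamma\|_1 + H_{1,T}\bigr) e^{3MT},
\end{align*}
and taking the supremum over $t\in[0,T]$ gives the stated bound.

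The only mildly delicate step is the differentiation of $\|e(t)\|_1$; everything else is an algebraic splitting plus the uniform bounds $u,\hat u\in[0,1]$ and $\hat W\leq M$. The discrete case (Lemma \ref{l:H1}) is recovered verbatim since all integrals over $[0,1]$ reduce to $\pi$-weighted sums under the identification $I_i\leftrightarrow\pi_i$, so the same constants $3M$ and $e^{3MT}$ appear there.
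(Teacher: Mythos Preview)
Your proof is correct and follows essentially the same route as the paper: an algebraic splitting of the difference of the right-hand sides, the uniform bounds $0\le u,\hat u\le 1$ and $0\le W,\hat W,\gamma,\hat\gamma\le M$, and Gr\"onwall with constant $3M$. The only cosmetic difference is that the paper works directly with the integral form $|u(t,x)-\hat u(t,x)|\le |u(0,x)-\hat u(0,x)|+\int_0^t[\cdots]\,\d\tau$, which sidesteps the (minor) issue of differentiating $\|e(t)\|_1$ that you flag.
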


\begin{remark}
\label{eq:cut_norm_conv}
Take a sequence $(p_N) \subset \mathcal{P}_{m,M}$ such that $p_N \overset{\| \cdot \|_{\square}}{\to}p.$ This clearly implies $u_N(0) \overset{L^1([0,1])}{\to} u(0)$ and $ \gamma_N \overset{L^1([0,1])}{\to} \gamma.$

Define $V_N:=\frac{1}{M}(W_N-W)$ and the corresponding integral operator $\mathbb{V}_N:= \frac{1}{M}(\W_N-\W).$ Since $|V_n| \leq 1$, we may use the inequality
\begin{align*}
  \| \mathbb{V}_N \|_{2} \leq \sqrt{8 \|V_N \|_{\square} } \to 0
\end{align*}
from \cite{cutnorm_inequality}, implying
\begin{align*}
   H_{1,T}^2(W_N) \leq T H_{2,T}^2(W_N)=T M^2 \int_{0}^{T} \| \mathbb{V}_N u(t) \|_2^2  \d t \leq T^2 M^2 \| \mathbb{V}_N \|_2^2 \to 0,
\end{align*}
resulting in $u_N(t) \overset{L^1([0,1)]}{\to}u(t) $ uniformly in $t \in [0,T].$
\end{remark}

The graphon extension of Szemerédi's regularity lemma \cite{Szemeredi_analysis} shows that any graphon can be approximated up to some error $\varepsilon>0$ with a discrete graphon with at most $K_{\max}(\varepsilon)$ blocks. The proof can be extended to the parameter set $\mathcal{P}_{m,M}$ by minor changes.

\begin{lemma} (Szemerédi's lemma for $\mathcal{P}_{m,M}$)
\label{l:Szemeredi}

$ \forall \varepsilon>0 \ \exists K_{\max}(\varepsilon)\ \forall p \in \mathcal{P}_{m,M} \ \exists p' \in \mathcal{P}_{m,M}$ with the following properties:
\begin{itemize}
    \item[1)] $\|p-p' \|_{\square}<\varepsilon$,
    \item[2)] $p'$ is discrete with partition $I_1', I_2', \dots, I_K'$ such that $1 \leq K \leq K_{\max}(\varepsilon)$ and
    \begin{align*}
        w_{kl}':=&\frac{1}{\pi_k' \pi_l'} \int_{I_k} \int_{I_l}W(x,y) \d x \d y \\
        z'_k(0):=& \frac{1}{\pi_k} \int_{I_k} u(0,x) \d x \\
        \gamma_k':=& \frac{1}{\pi_k} \int_{I_k} \gamma(x) \d x,
    \end{align*}
    \item[3)] when $p$ is discrete with partition $I_1, \dots, I_n$ such that $n \geq K_{\max}(\varepsilon)$ then the partition $(I_k)_{k=1}^{K}$ can be chosen in a way  that  $(I_i)_{i=1}^n$ is a refinement of $(I_k)_{k=1}^{K}.$
\end{itemize}
\end{lemma}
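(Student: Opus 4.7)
My plan is to mimic the greedy partition-refinement proof of Frieze--Kannan's weak regularity lemma, applied simultaneously to all three components of $p = (W, u(0), \gamma)$. Given a measurable partition $\mathcal{P} = \{I_1', \ldots, I_K'\}$ of $[0,1]$, let $p_\mathcal{P}' := (W_\mathcal{P}, u_\mathcal{P}(0), \gamma_\mathcal{P})$ denote the step object whose block values are the averages appearing in item (2); these are exactly the $L^2$ conditional expectations of $W$ with respect to the $\sigma$-algebra generated by $\mathcal{P} \times \mathcal{P}$, and of $u(0)$ and $\gamma$ with respect to $\mathcal{P}$. I would then work with the energy-type potential
\begin{align*}
\Phi(\mathcal{P}) := \|W_\mathcal{P}\|_2^2 + M \|u_\mathcal{P}(0)\|_2^2 + M \|\gamma_\mathcal{P}\|_2^2,
\end{align*}
which is monotone nondecreasing under refinement and uniformly bounded from above by a constant $C(M) < \infty$ on $\mathcal{P}_{m,M}$.

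The iterative step begins with $\mathcal{P}_0 = \{[0,1]\}$; at step $k$ I check the three threshold inequalities
\begin{align*}
\|W - W_{\mathcal{P}_k}\|_\square < \tfrac{\varepsilon}{3}, \qquad \|u(0) - u_{\mathcal{P}_k}(0)\|_1 < \tfrac{\varepsilon}{3}, \qquad \|\gamma - \gamma_{\mathcal{P}_k}\|_1 < \tfrac{\varepsilon}{3}.
\end{align*}
If all three hold, I stop. Otherwise I extract a witness: a pair $S, T \subseteq [0,1]$ realising the $W$-cut norm, or the positive-part support $S \subseteq [0,1]$ of the failing 1-dimensional discrepancy, which carries an $L^1$ deficit of at least $\varepsilon/6$. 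I then refine $\mathcal{P}_k$ by intersecting each block with $\{S, S^c\}$ (and additionally with $\{T, T^c\}$ in the 2D case), at most quadrupling the number of blocks. The classical Cauchy--Schwarz energy-increment computation -- pairing $W_{\mathcal{P}_{k+1}} - W_{\mathcal{P}_k}$ with $\mathds{1}_{S \times T}$, or $u_{\mathcal{P}_{k+1}}(0) - u_{\mathcal{P}_k}(0)$ with $\mathds{1}_S$ -- yields $\Phi(\mathcal{P}_{k+1}) - \Phi(\mathcal{P}_k) \geq c(M) \varepsilon^2$ for a constant $c(M) > 0$. Boundedness of $\Phi$ forces termination after $N(\varepsilon) = O(1/\varepsilon^2)$ refinements, so $K_{\max}(\varepsilon) := 4^{N(\varepsilon)}$ is the desired uniform bound.

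For property (3), when $p$ is already discrete with partition $(I_i)_{i=1}^n$, I would run the same algorithm but restrict the witness-set search to unions of the blocks $I_i$. The remark preceding the lemma shows that the cut norm in the $W$-step is indeed attained on such unions, since $W - W_{\mathcal{P}_k}$ is piecewise constant on the product partition generated by the $I_i$; the 1-dimensional discrepancies $u(0) - u_{\mathcal{P}_k}(0)$ and $\gamma - \gamma_{\mathcal{P}_k}$ are likewise piecewise constant on the $I_i$, so their positive-part supports can also be chosen as unions of blocks. Every $\mathcal{P}_k$ produced then consists of unions of $I_i$'s, hence $(I_i)$ is a refinement of the output $(I_k')$ as required. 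The step I expect to be slightly delicate is gluing the three per-component error thresholds back into the single combined bound $\|p - p'\|_\square < \varepsilon$ of (1); this reduces to observing that for signed functions on $[0,1]$ the cut norm and the $L^1$ norm coincide up to a factor of $2$ (by splitting into positive and negative parts), so a harmless adjustment of the threshold constants suffices to fit all three components under the single $\varepsilon$ budget.
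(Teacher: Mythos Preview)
Your proof is correct but follows a different route from the paper's. The paper treats the three components separately: it invokes the weak regularity lemma for graphons as a black box to get a partition for $W$, then handles $u(0)$ and $\gamma$ by a much simpler level-set partition (split $[0,1]$ into $\lceil 1/\varepsilon\rceil$ pieces according to where $u(0,x)$ falls in $[0,1]$, and similarly for $\gamma$), and finally takes the common refinement of all three partitions, averaging each component over the resulting blocks; the key closing fact is that passing to a refinement and re-averaging only worsens the cut-norm error by a factor of~$2$. Your approach instead runs the energy-increment machinery directly and jointly on all three components, refining by a witness for whichever threshold fails.

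Both give $K_{\max}(\varepsilon)=2^{O(1/\varepsilon^2)}$. The paper's argument is shorter because it reuses existing results and the one-dimensional pieces are trivial; yours is self-contained and has the pleasant side effect that property~(3) falls out without any lower bound on $n$, since every refinement step uses witness sets that are unions of the $I_i$. One small remark: your final worry about ``gluing'' the three thresholds is unnecessary, because $\|p\|_\square$ is \emph{defined} in \eqref{eq:cut_norm_p} using the $L^1$ norm (not the one-dimensional cut norm) on $u(0)$ and $\gamma$, so your three $\varepsilon/3$ bounds add up directly.
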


\begin{remark}
    From 3) it is easy to see that $p'$ is block homogeneous when $p$ is discrete.
\end{remark}

One issue with the cut norm is that it depends on the labeling of the vertices and it might distinguish between two isomorphic graphs. Instead, we might want to consider unlabeled graphs where the permutation of vertices are treated as equivalent. The graphon analogy of permutations are bijective measure preserving mappings $\phi: [0,1] \to [0,1]$ forming the set $\Phi.$ 

The corresponding change of coordinates ("relabeling") of one and two variable functions $f(x), g(x,y)$ are denoted by
$$f^{\phi}(x):=f(\phi(x)),\qquad  g^{\phi}(x,y):=g(\phi(x),\phi(y)).$$ Similarly, for $p \in \mathcal{P}_{m,M}$, we use the convention $p^{\phi}:=(W^{\phi},(u(0))^{\phi},\gamma^{\phi})$.

It is easy to check that $\|p^{\phi} \|_{\square}=\| p\|_{\square}.$

\begin{remark}
    \label{r:u_relabel}
    Relabeling only effects the solution of \eqref{eq:u} up to a change of coordinates:
    \begin{align*}
    \partial_t u^{\phi}(t,x)=& \partial_t u(t,\phi(x))=(1-u(t,\phi(x))\int_{0}^{1}W(\phi(x),y) u(t,y) \d y \\
    &-\gamma(\phi(x))u(t,\phi(x)) \\
    =&(1-u^{\phi}(t,x))\int_{0}^{1}W^{\phi}(x,y)u^{\phi}(t,y) \d x-\gamma^{\phi}(x)u^{\phi}(t,x)
    \end{align*}
    corresponding to the solution of the parameter $p^{\phi}.$
\end{remark}

We define an equivalence relation between $p,p' \in \mathcal{P}_{m,M}$ where $p \sim p'$ if and only if there is a $\phi \in \Phi $ such that $p^{\phi}=p'$ a.e.  The factorized space is denoted by $\tilde{\mathcal{P}}_{m,M}:= \mathcal{P}_{m,M} / \sim$ on which we define the (analogue of the) cut distance:
\begin{align*}
d_{\square}(p,p'):=\inf_{ \phi \in \Phi} \left\| p^{\phi}-p' \right\|_{\square}.
\end{align*}

\begin{remark}
Clearly $d_{\square}(p,p') \geq 0$ with equality if and only if $p \sim p'.$ Symmetry also holds due to
$$\left\| p^{\phi}-p' \right\|_{\square}=\left\| p-\left(p'\right)^{\phi^{-1}} \right\|_{\square}.$$

For the triangle inequality, we note that for any $\varepsilon>0$ we may choose $\phi_1,\phi_2 \in \Phi$ such that
\begin{align*}
 \left\|p^{\phi_1}-p' \right\|_{\square} \leq d_{\square}(p,p')+\varepsilon, \qquad 
 \|p'-(p'')^{\phi_2} \|_{\square} \leq d_{\square}(p',p'')+\varepsilon.
\end{align*}
Then
\begin{align*}
d_{\square}(p,p'') & \leq \left \|p^{\phi_1 \circ \phi_2^{-1}} -p''\right \|_{\square}=\left \|p^{\phi_1 } -(p'')^{\phi_2}\right \|_{\square} \\
 &\leq \left\|p^{\phi_1}-p' \right\|_{\square}+\|p'-(p'')^{\phi_2} \|_{\square}  \leq d_{\square}(p,p')+d_{\square}(p',p'')+2 \varepsilon.
\end{align*}
Letting $\varepsilon\to 0$ gives
\begin{align*}
 d_{\square}(p,p'') & \leq d_{\square}(p,p')+d_{\square}(p',p'').
\end{align*}
\end{remark}

\begin{lemma}
\label{l:compact}
The metric space  $\left( \tilde{\mathcal{P}}_{m,M}, d_{\square} \right)$ is compact.
\end{lemma}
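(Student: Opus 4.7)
The plan is to prove sequential compactness: every sequence $(p_n) \subset \mathcal{P}_{m,M}$ has a subsequence that is Cauchy in $d_\square$, together with a limit in $\tilde{\mathcal{P}}_{m,M}$. The main engine is Szemerédi's lemma for parameters (Lemma \ref{l:Szemeredi}), which reduces questions about arbitrary $p \in \mathcal{P}_{m,M}$ to questions about block homogeneous parameters with a bounded number of blocks, and the latter form essentially finite-dimensional compact spaces once one quotients by permutations.

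For each $k \geq 1$ I apply Lemma \ref{l:Szemeredi} with $\varepsilon_k = 1/k$ to every $p_n$ to obtain a block homogeneous $p_n^{(k)}$ with at most $K_k := K_{\max}(1/k)$ blocks and $\|p_n - p_n^{(k)}\|_\square < 1/k$. A block homogeneous parameter is specified, up to permutation of blocks, by finitely many numbers in bounded ranges: block sizes $\pi_j' \geq 0$ summing to $1$, symmetric weights $w_{ij}' \in [m,M]$, initial values $z_j'(0) \in [m, 1-m]$ and curing rates $\gamma_j' \in [0,M]$. Modding out by $S_K$ and taking the union over $K \leq K_k$ thus yields a compact subset $X_k \subset \tilde{\mathcal{P}}_{m,M}$, since $d_\square$ is continuous in these Euclidean parameters. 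A standard diagonal argument then extracts a subsequence $(p_{n_j})$ such that for every fixed $k$, $p_{n_j}^{(k)} \to q^{(k)}$ in $d_\square$ for some $q^{(k)} \in X_k$. The triangle inequality
$$d_\square(p_{n_i}, p_{n_j}) \leq \tfrac{2}{k} + d_\square(p_{n_i}^{(k)}, q^{(k)}) + d_\square(q^{(k)}, p_{n_j}^{(k)}),$$
sent $i,j \to \infty$ for suitable $k$, shows that $(p_{n_j})$, and similarly $(q^{(k)})_k$, is Cauchy in $d_\square$.

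The main obstacle is producing a genuine limit $p^* \in \mathcal{P}_{m,M}$, i.e.\ completeness of $(\tilde{\mathcal{P}}_{m,M}, d_\square)$. Here I plan to exploit part (3) of Lemma \ref{l:Szemeredi}: by first discretizing each $p_n$ on a very fine partition and then producing the Szemerédi approximations iteratively, I can arrange that the partition underlying $p_n^{(k+1)}$ refines the one underlying $p_n^{(k)}$. The block values of $p_n^{(k)}$ are then conditional expectations of $(W_n, u_n(0), \gamma_n)$ on a nested sequence of $\sigma$-algebras, and by choosing the relabelings that realize $p_{n_j}^{(k)} \to q^{(k)}$ consistently from one level to the next, this tower structure is inherited by the limits $q^{(k)}$. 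A bounded martingale argument (all parameters lie in fixed compact intervals) then produces pointwise a.e.\ limits $W^*, u^*(0), \gamma^*$ satisfying the defining bounds and symmetry of $\mathcal{P}_{m,M}$, and bounded convergence gives $\|q^{(k)} - p^*\|_\square \to 0$. Combined with the Cauchy estimate this yields $d_\square(p_{n_j}, p^*) \to 0$, completing the proof.
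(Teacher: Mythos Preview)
Your proposal is essentially the same argument the paper invokes: the paper does not give an independent proof but simply says that the martingale argument from Theorem~5.1 of \cite{Szemeredi_analysis} carries over with straightforward modifications, and what you sketch (regularity-lemma approximation by step parameters with a bounded number of blocks, diagonal extraction, then martingale convergence along a tower of refining partitions to produce the limit) is precisely that argument, adapted to the enlarged parameter triple $(W,u(0),\gamma)$.

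One small technical point to watch: the nested-partition step needs the version of the weak regularity lemma that refines any prescribed starting partition, not merely property~(3) of Lemma~\ref{l:Szemeredi} as stated (which only says the Szemer\'edi partition can be taken as a coarsening of a given fine discrete partition, and this does \emph{not} by itself force the level-$(k{+}1)$ partition to refine the level-$k$ one). Your workaround of ``first discretizing on a very fine partition'' does not automatically yield $\mathcal{P}^{(k+1)} \succeq \mathcal{P}^{(k)}$. The refining form of the regularity lemma is, however, exactly what is proved in \cite{Szemeredi_analysis}, so citing it directly (as the paper does) closes this gap; just be explicit that you are using that stronger statement rather than Lemma~\ref{l:Szemeredi} verbatim.
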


The martingale argument in the proof of Theorem 5.1 in \cite{Szemeredi_analysis} applies for the proof of Lemma \ref{l:compact} with straightforward modifications. 

\section{Proofs}
\label{s:proofs}
\subsection{Proofs of general statements}

\begin{proof} (Lemma \ref{l:analytic})
Define $T:= \sup \{t \geq 0 | h(t)=0 \}.$ Clearly, $0<T_0 \leq T$. Indirectly assume $T<\infty.$ 
For all $0 \leq t<T$ we have $h(t)=0$, which means all the left derivatives of $h$ are $0$ at $T$, hence, the Taylor series expansion is $0$ around $T$. Then there exists an $\varepsilon>0$ such that $h(t)=0 \, \forall t \in [T,T+\varepsilon].$ This would imply $T +\varepsilon \leq T $, resulting in a contradiction.
\end{proof}

Next, we prove Lemma \ref{l:H1_graphon} which directly implies Lemma \ref{l:H1}.

\begin{proof} (Lemma \ref{l:H1_graphon})
\begin{align*}
u(t,x)=& \, u(0,x)+\int_{0}^{t} \left(1-u(\tau,x) \right)\W u(t,x)-\gamma(x) u(\tau,x) \d \tau \\
\left |u(t,x)-\hat{u}(t,x) \right| \leq & \left |u(0,x)-\hat{u}(0,x) \right| +\int_{0}^{t} \underbrace{\left| \W u(\tau,x)  \right|}_{\leq M} \cdot \left |u(\tau,x)-\hat{u}(\tau,x) \right| \d \tau+\\
&\int_{0}^{t} \underbrace{\left|1-\hat{u}(\tau,x) \right|}_{\leq 1} \cdot \left| \W u(\tau,x)- \hat{\W}\hat{u}(\tau,x) \right| \d \tau+ \\
&\int_{0}^{t} \underbrace{\gamma(x)}_{ \leq M} \left | u(\tau,x)-\hat{u}(\tau,x) \right | \d \tau+ \int_{0}^{t} \underbrace{\hat{u}(\tau,x)}_{ \leq 1} \left | \gamma(x)-\hat{\gamma}(x)\right| \d \tau \\
\left |u(t,x)-\hat{u}(t,x) \right| \leq & \left |u(0,x)-\hat{u}(0,x) \right| +T \left| \gamma(x)-\hat{\gamma}(x) \right| \\
&+\int_{0}^{t} \underbrace{\left | \hat{\W} \left[u(\tau,x)-\hat{u}(\tau,x) \right]  \right|}_{ \leq M \left \|u(\tau)-\hat{u}(\tau) \right \|_1} \d \tau+ \\
& \int_{0}^{t} \left |  \left[\W-\hat{\W} \right]u(\tau,x)
 \right | \d \tau+2M \int_{0}^{t} \left |u(\tau,x)-\hat{u}(\tau,x)\right | \d \tau \\
 \|u(t)-\hat{u}(t) \|_{1} \leq& \|u(0)-\hat{u}(0) \|_{1}+T \|\gamma-\hat{\gamma} \|_{1}+\underbrace{\int_{0}^{t} \left \| \left(\W-\hat{\W} \right)u(t)\right \|_{1}}_{=H_{1,t} \leq H_{1,T} } + \\
 &3M \int_{0}^{t} \|u(\tau)-\hat{u}(\tau) \|_{1 } \d \tau.
\end{align*}
Finally, applying Gr\"onwall's Lemma gives
\begin{align*}
 \sup_{0 \leq t \leq T}\|u(t)-\hat{u}(t) \|_{1} \leq& \left( \|u(0)-\hat{u}(0) \|_{1}+T \|\gamma-\hat{\gamma}\|_{1}+H_{1,T} \right)e^{3MT}.
\end{align*}
\end{proof}

\begin{proof} (Lemma \ref{l:Szemeredi})

From \cite{Szemeredi_analysis}, for any $W$ and $\varepsilon>0$ there exists a $\tilde{W}$ such that $\|W-W' \|_{\square}<\varepsilon$, and $\tilde{W}$ is constant over some intervals $I_1', \dots, I_{K'}'$, with  $K' \leq 2^{ \left \lceil\frac{2}{\varepsilon^2} \right \rceil }$. Furthermore, when $W$ is discrete with partition $I_1,\dots, I_n$, we can choose $(I_k')_{k=1}^{K'}$ so that $(I_i)_{i=1}^{n}$ is a refinement of $(I_k')_{k=1}^{K'}$ if $n \geq 2^{ \left \lceil\frac{2}{\varepsilon^2} \right \rceil }.$

In the next step, we construct $u'(0).$ Since $0 \leq u(0,x) \leq 1$, we may use
$$I_k'':= \left \{ x \in [0,1]\, | \, \frac{k-1}{K''} \leq u(0,x)<\frac{k}{K''} \right \}$$
where $K'':= \left \lceil \frac{1}{\varepsilon} \right \rceil$; apart from $\frac{k-1}{K''} \leq u'(0,x) \leq \frac{k}{K''}$, the value of $u'(0,x)$ is arbitrary. Clearly $\|u(0)-u'(0) \|_1 \leq \varepsilon.$

Similarly, $0 \leq \gamma(x) \leq M$, hence, the same argument provides a $\gamma'$ with partition $I_1''', \dots, I_{K'''}'''$.

Take the smallest common refinement of $(I_k)_{k=1}^{K'}, (I_k)_{k=1}^{K''}, (I_k)_{k=1}^{K'''}$ and denote it by $(I_k)_{k=1}^{K}.$

We define $W'$ so that
$$ w_{kl}':=\frac{1}{\pi_k' \pi_l'} \int_{I_k} \int_{I_l}W(x,y) \d x \d y;$$
according to \cite{Szemeredi_analysis,ALON2003},
$$\|W-W' \|_{\square} \leq 2 \varepsilon.$$
 

We now specify the values of $u'(0)$ and $\gamma'$ as 
\begin{align*}
        u'_k(0):=& \frac{1}{\pi_k} \int_{I_k} u(0,x) \d x, \\
        \gamma_k':=& \frac{1}{\pi_k} \int_{I_k} \gamma(x) \d x,
    \end{align*}
resulting in
\begin{align*}
 \|p-p' \|_{\square}=\|W-W' \|_{\square}+\|u(0)-u'(0) \|_{1}+ \|\gamma -\gamma' \|_{1} \leq 4 \varepsilon.   
\end{align*}
\end{proof}

\subsection{Proof of Section \ref{s:results_reconstruction}}

\begin{proof}(Theorem \ref{t:reconstruction_characterisation})
Since the nullspace of $G(T_0)$ is composed of vectors $v \in \mathbb{R}$ such that \begin{align*}
    \int_{0}^{T_0} \langle v, z(t) \rangle^2 \d t =0, 
\end{align*} 
\ref{condreconb} $\iff$ \ref{condreconc} is trivial.

Now assume there exists a vector $0 \neq v \in \mathcal{N}.$

Set $u_i:= \frac{v_i}{\pi_i}$ and $\hat{w}_{ij}:=w_{ij}-\varepsilon u_iu_j.$ Clearly, $\hat{W}^{\intercal}=\hat{W}$ and we can choose $\varepsilon>0 $ to be small enough such that $m' \leq \hat{w}_{ij} \leq M'$ for any $0<m'<m \leq M <M'$ making $\hat{W} \sim \mathcal{P}_{m',M'}^{(n)}.$
\begin{align*}
  \eta^{(i)}_j=&\, \varepsilon u_i u_j \pi_j=\varepsilon u_i v_j  \Rightarrow \eta^{(i)}=\varepsilon u_i v ,\\
  \left\| \eta^{(i)} \right \|_{G(T_0)}^2=& \, \varepsilon^2 u_i^2 \underbrace{\|v \|_{G(T_0)}^2}_{=0}=0 \ \overset{\eqref{eq:H2_noice}}{\Rightarrow} H_{2,T_0}=0.
\end{align*}
$H_{2,T_0}=0$ and Lemma \ref{l:H1} implies $z(t)=\hat{z}(t)$ for all $0 \leq t \leq T_0$ while $\hat{W} \neq W,$ which means $W$ is not $q$-reconstructible.

Lastly, assume $\mathcal{N}= \{0 \}$ making $\| \cdot \|_{G(T_0)}^2$ a norm. Let $\hat{W}$ a network such that $\hat{z}(t)=z(t)$ on $t \in [0,T_0].$ 

Due to  \eqref{eq:A_observe} for $t \in [0,T_0]$ we also have $\mathcal{A}z(t)=\hat{\mathcal{A}}\hat{z}(t)= \hat{\mathcal{A}}z(t)$, implying $H_{2,T_0}=0.$ Consequently, \eqref{eq:H2_noice} yields
\begin{align*}
  \forall i \in V \ \|\eta^{(i)} \|_{G(T_0)}^2=0 \quad \Rightarrow \quad \forall i \in V \ \eta^{(i)}=0  \quad\Rightarrow \quad \hat{\mathcal{A}}=\mathcal{A} \quad \Rightarrow \quad \hat{W}=W,  
\end{align*}
making $W$ $q$-reconstructible.
\end{proof}

\begin{proof} (Theorem \ref{t:different_d})

Fix any parameter $q \in \mathcal{Q}_{m,M}^{(n)}$ with $\gamma=0$. We will use the notation $z^c(t):=1-z(t)$ for short.
\begin{align*}
 \frac{\d}{\d t}z_{i}^{c}(t)=&-\frac{\d}{\d t}z_i(t)=-z_i^c(t) \left(\mathcal{A}z(t) \right)_{i} \\
 z_{i}^c(t)=&\exp \left(-\int_{0}^{t} \left(  \mathcal{A} z(\tau)\right)_{i} \d \tau \right)z_{i}^{c}(0)
\end{align*}

We give a uniform upper bound on $z_{i}^c(t)$ first. As there is no recovery in the SI model, $z_{i}(t)$ is monotone increasing, so $z_{i}(t) \geq z_{i}(0) \geq m.$ Then
\begin{align*}
\left(\mathcal{A}z(t) \right)_i =&\sum_{j \in V} w_{ij} \pi_{j} z_{j}(t) \geq m^2 \sum_{j \in V} \pi_j=m^2,
\end{align*}
so
\begin{align}
\label{eq:ziexp}
z_{i}^{c}(t) \leq& e^{-m^2 t}.
\end{align}

Next,
\begin{align*}
\int_{0}^{t} \left(\mathcal{A} z(t)\right)_{i} \d \tau=&\int_{0}^{t} \left(\mathcal{A} \mathds{1} \right)_i \d \tau +\int_{0}^{t} \left(\mathcal{A} z^c(t)\right)_{i} \d \tau=d_it+O(1), \\
z_{i}^{c}(t)=&\, e^{-d_it+O(1)}z_{i}^{c}(0).
\end{align*}
($\mathds{1}$ denotes the all $1$ vector.) 

Take a $v \in \mathbb{R}^n$ such that for all $t \in [0,T_0]$ we have $\langle v, z(t) \rangle=0$. From Lemma \ref{l:analytic}, $\langle v, z(t) \rangle=0$ actually holds for any $t \geq 0$.

In particular, since $z(t)=1-z^c(t) \to \mathds{1}$ as $t \to \infty$, we must have $\langle v, \mathds{1} \rangle=0 $, which yields $\langle v, z^c(t) \rangle=0. $

Without loss of generality, we assume $d_1<\dots <d_n.$

Define $\zeta_{k}(t):=\sum_{i \geq k}z_{i}^c(t).$ Due to \eqref{eq:ziexp}, the slowest exponential decay dominates $\zeta_k(t)$ making
\begin{align*}
   \lim_{t \to \infty} \frac{z_{i}^{c}(t)}{\zeta_{k}(t)} =0 
\end{align*}
for all $i>k$. Therefore,
\begin{align*}
0=\lim_{t \to \infty} \frac{1}{\zeta_{1}(t)} \langle u, z^c(t) \rangle=\lim_{t \to \infty} \sum_{i \in V} u_{i} \frac{z_{i}^c(t)}{\zeta_1(t)}=u_1.    
\end{align*}
Inductively, we may assume $u_i=0$ for all $i \leq k.$
\begin{align*}
 0=\lim_{t \to \infty} \frac{1}{\zeta_{k+1}(t)} \langle u, z^c(t) \rangle=\lim_{t \to \infty} \sum_{i \geq k+1} u_i \frac{z_{i}^c(t)}{\zeta_{k+1}(t)}=u_{k+1},   
\end{align*}
showing that $u=0$. 

This means $W$ is $q$-reversible.
\end{proof}

\begin{proof} (Theorem \ref{t:reconstruction_bad}.)

Let $p,p' \in \mathcal{P}_{m,M}^{(n)}$ with corresponding trajectories $z(t), z'(t)$. From Lemma \ref{l:H1_graphon} and Remark \ref{eq:cut_norm_conv} we know that for any $\varepsilon>0$ there is a $\delta>0$ such that $\|p-p'\|_{\square}<\delta$ implies
\begin{align*}
    \sup_{0 \leq t \leq T} \|z(t)-z'(t) \|_{1,\pi}<\frac{\varepsilon}{2}.
\end{align*}

From Lemmas \ref{l:H1_graphon} and \ref{l:Szemeredi} we know that there exists an $ n_0$ such that for any $n \geq n_0$ and $p=(W,q) \in \mathcal{P}_{m,M}^{(n)}$ there is a block homogeneous $p'=(W',q') \in \mathcal{P}^{(n)}_{m,M}$  with $1 \leq K \leq K_{\max}(\varepsilon)$ 
 blocks such that $\|p-p' \|_{\square}<\delta.$

Without loss of generality, we may assume  $\delta<\frac{1}{16}m$ and $n_0 \geq 4 K_{\max}(\varepsilon)$.

We will modify $W'$ into $W''$ in such a way that $p'':=(\hat{W},q')$ is still block regular with the same partition and $\bar{a}_{kl}$ values,  while $W'$ and $\hat{W}$ have a non-vanishing cut distance. This also means that the corresponding trajectory $z''(t)$ is identical to $z'(t).$

Informally, the idea is that we split vertices into odd and even labels and double the weights between vertices of the same parity and zero opposite ones.

Let the size of the block be $|V_k|:=2s_k+r_k$ where $r_{k} \in \{0,1 \}$ and $s_k \in \mathbb{N}.$ We can further divide the blocks into $V_{k,0}, V_{k,1},V_{k,2}$ where $|V_{k,0}|=r_{k}$ and $|V_{k,1}|=|V_{k,2}|=s_k.$

Take $i \in V_{k}, j \in V_l $ If $i \in V_{k,0}$, then set $\hat{w}_{ij}:=w'_{ij}.$ For $ i \in V_{k,1}$, 
\begin{equation*}
        \hat{w}_{ij} := \begin{cases}
                        w_{ij}' \text{ if $j \in V_{l,0} $} \\
                        2w_{ij}' \text{ if $j \in V_{l,1}$} \\
                        0, \textit{  if $j \in V_{l,2}$}
                    \end{cases}
\end{equation*}
then for $i \in V_{k,2}$, similarly take
\begin{equation*}
        \hat{w}_{ij} := \begin{cases}
                        w_{ij}' \text{ if $j \in V_{l,0} $} \\
                        0 \text{ if $j \in V_{l,1}$} \\
                        2w_{ij}' \text{ if $j \in V_{l,2}$.}
                    \end{cases}
\end{equation*}

Clearly, for any $k,l$ and $i \in V_k$,
$$\sum_{j \in V_l}\hat{w}_{ij} \frac{1}{n}=\sum_{j \in V_l}w'_{ij} \frac{1}{n},$$ and $\pi_i=\frac{1}{n}$, so indeed $p''$ is block regular and $z''(t)=z'(t).$

Furthermore, take the sets $S:=\bigcup_{k=1}^{K} V_{k,1}.$
\begin{align*}
    \|\hat{W}-W' \|_{\square} \geq & \frac{1}{n^2} \sum_{i,j \in S} \hat{w}_{ij}-w'_{ij}=\frac{1}{n^2} \sum_{i,j \in S} w_{ij}'=\frac{1}{n^2} \sum_{k,l=1}^{K} \bar{w}_{kl}s_{k}s_{l} \\
    \geq & \frac{1}{n^2}\sum_{k,l=1}^{K} \bar{w}_{kl} \frac{|V_k|-1}{2} \frac{|V_l|-1}{2} \geq \frac{m}{4n^2} \sum_{k,l=1}^{K} \left(|V_k|-1\right) \left(|V_l| -1\right) \\
    &=\frac{1}{4}m \frac{n^2-2Kn+K^2}{n^2}\geq \frac{1}{4}m \left(1-\frac{2K_{\max}(\varepsilon)}{n} \right) \geq \frac{1}{8}m .
\end{align*}
From $\|W-W' \|_{\square} \leq \|p-p'\|_{\square}<\delta<\frac{1}{16}m$ it follows that $\|W-\hat{W} \|_{\square}  \geq \frac{1}{16}m. $

Finally, set $\hat{p}:= (\hat{W},q)$ with trajectory $\hat{z}(t).$ Then
\begin{align*}
\|p''-\hat{p} \|_{\square}=\|z(0)-z'(0) \|_{1, \pi}+\|\gamma-\gamma' \|_{1, \pi} \leq d_{\square}(p,p')<\delta,
\end{align*}
resulting in
\begin{align*}
\sup_{0 \leq t \leq T} \|z(t)-\hat{z}(t) \|_{1, \pi} \leq &   \sup_{0 \leq t \leq T} \|z(t)-z'(t) \|_{1, \pi}+\sup_{0 \leq t \leq T} \|z'(t)-z''(t) \|_{1, \pi}  \\
& +\sup_{0 \leq t \leq T} \|z''(t)-\hat{z}(t) \|_{1, \pi}<  \frac{\varepsilon}{2}+0+\frac{\varepsilon}{2}=\varepsilon.
\end{align*}

\end{proof}
\subsection{Proofs for Section \ref{s:results_prediction}}

We will prove Theorem \ref{t:uniform_prediction} by showing that the following slightly stronger statement also holds:

\begin{theorem}
\label{t:uniform_prediction_stronger}
Assume $ 0 \leq \hat{W}(x,y)=\hat{W}(y,x) \leq M$ ($i,j \in V$) a.e. Then
\begin{align}
 \forall \varepsilon>0 \ \exists \delta>0 \ \forall  p \in  \mathcal{P}_{0,M}: \ H_{2,T_0}<\delta \Rightarrow H_{2,T}<\varepsilon .
\end{align}
\end{theorem}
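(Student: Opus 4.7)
The plan is to argue by contradiction, exploiting the compactness of $\tilde{\mathcal{P}}_{0,M}$ under the cut distance together with the analyticity in time of the solutions to \eqref{eq:u}. Suppose the conclusion fails: there exist $\varepsilon_0 > 0$, a sequence $p_N = (W_N, u_N(0), \gamma_N) \in \mathcal{P}_{0,M}$ and symmetric kernels $\hat{W}_N$ with $0 \leq \hat{W}_N \leq M$ such that $H_{2,T_0}(W_N, \hat{W}_N) \to 0$ while $H_{2,T}(W_N, \hat{W}_N) \geq \varepsilon_0$ for every $N$.

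The first step is to upgrade Lemma \ref{l:compact} to the joint space of pairs $(p, \hat{W})$ modulo \emph{simultaneous} relabeling $\phi \in \Phi$: the martingale/weak-Szemerédi proof goes through verbatim, since a single refining partition can be chosen that works for all four kernel/function components at once. Because $H_{r,t}$ is invariant under simultaneous relabeling (it just amounts to a change of coordinates in \eqref{eq:u}, cf.\ Remark \ref{r:u_relabel}), after passing to a subsequence and relabeling we may assume $p_N \to p^* = (W^*, u^*(0), \gamma^*) \in \mathcal{P}_{0,M}$ and $\hat{W}_N \to \hat{W}^*$ in cut norm.

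The second step is to show that $H_{2,t}$ depends continuously on the joint cut distance for every $t \in [0,T]$. Writing $u_N(t), u^*(t)$ for the corresponding solutions of \eqref{eq:u}, Lemma \ref{l:H1_graphon} and the argument of Remark \ref{eq:cut_norm_conv} give $\sup_{t \in [0,T]}\|u_N(t) - u^*(t)\|_1 \to 0$, and together with $0 \leq u_N, u^* \leq 1$ this upgrades to convergence in $L^2$. The triangle split
\[
\|(\W_N - \hat{\W}_N) u_N(t) - (\W^* - \hat{\W}^*) u^*(t)\|_2 \leq 2M\|u_N(t) - u^*(t)\|_2 + \|(\W_N - \W^*)u^*(t)\|_2 + \|(\hat{\W}_N - \hat{\W}^*)u^*(t)\|_2,
\]
combined with the cut-norm-to-operator-norm inequality $\|\mathbb{U}\|_{2\to 2} \leq \sqrt{8\|U\|_\square}$ of \cite{cutnorm_inequality} (applied to suitable rescalings of $W_N - W^*$ and $\hat{W}_N - \hat{W}^*$), shows that the right-hand side tends to $0$ uniformly in $t \in [0,T]$. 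Since the integrand is bounded by $(2M)^2$, dominated convergence yields $H_{2,t}(W_N, \hat{W}_N) \to H_{2,t}(W^*, \hat{W}^*)$ for every $t \in [0,T]$; in particular $H_{2,T_0}(W^*, \hat{W}^*) = 0$ while $H_{2,T}(W^*, \hat{W}^*) \geq \varepsilon_0 > 0$.

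The final step is to reach a contradiction via Lemma \ref{l:analytic} applied to $h(t) := H_{2,t}^2(W^*, \hat{W}^*)$. Because the right-hand side of \eqref{eq:u} is quadratic in $u$ and $\W^*$ is a bounded operator on $L^\infty([0,1])$, an analytic Picard iteration produces a power-series representation $u^*(t) = \sum_{k \geq 0} u^*_k\, t^k$ converging in $L^\infty$ on a neighbourhood of every $t_0 \geq 0$; hence $s \mapsto \|(\W^* - \hat{\W}^*) u^*(s)\|_2^2$ is real-analytic in $s$ and so is its primitive $h(t)$. Since $h$ vanishes on $[0,T_0]$, Lemma \ref{l:analytic} forces $h \equiv 0$, contradicting $H_{2,T}(W^*, \hat{W}^*) \geq \varepsilon_0 > 0$. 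The main obstacle will be the joint-continuity step, which requires both upgrading the compactness lemma to pair relabelings and deploying the cut-norm-to-$L^2$ operator bound on limit kernels that are only bounded (not bounded away from zero); the analyticity of $u^*$ in the graphon setting is a secondary but genuine point, as finite-dimensional Cauchy-Kovalevskaya no longer applies directly and one has to exploit that $\W^*$ is bounded on $L^\infty$.
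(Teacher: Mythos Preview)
Your argument is correct and takes a genuinely different route from the paper's. The paper does not pass to a limit in the pair $(p,\hat W)$; instead it decomposes $H_{2,T}^2=\int_0^1\|\eta(x)\|_{G_T}^2\,\d x$ via the row functions $\eta(x)=W(x,\cdot)-\hat W(x,\cdot)$, introduces the regularized ratio
\[
\lambda_N(p):=\sup_{\|f\|_2=1}\frac{\|f\|_{G_T}^2}{\|f\|_{G_{T_0}}^2+\tfrac1N},
\]
and proves $\chi_N:=\sup_{p\in\mathcal P_{0,M}}\tfrac1N\lambda_N(p)\to 0$ by combining compactness of $\tilde{\mathcal P}_{0,M}$ with weak compactness of the $L^2$ unit ball for the near-maximizers $f_N$. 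This yields the explicit inequality $H_{2,T}^2\le NT\,H_{2,T_0}^2+M^2\chi_N$, which is more than bare continuity: it gives a uniform modulus and is a step towards the conjectured bound $\Lambda<\infty$. Your direct approach, by contrast, lifts compactness to pairs $(p,\hat W)$ under \emph{simultaneous} relabeling (a legitimate extension of the martingale argument, since a single refining partition regularizes all four components at once) and then shows $t\mapsto H_{2,t}$ is jointly continuous in the cut distance; this is conceptually cleaner and avoids the auxiliary $\lambda_N$, at the cost of producing no quantitative modulus. Both routes ultimately rest on the same two ingredients---cut-distance compactness and analyticity of $t\mapsto u(t)$ in the graphon setting---and your remark that the latter requires a Banach-space Picard argument rather than Cauchy--Kovalevskaya applies equally to the paper's own use of Lemma~\ref{l:analytic} for $h(t)=\langle f,u(t)\rangle$.
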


Before proving Theorem \ref{t:uniform_prediction_stronger}, we make some preparations. Define
\begin{align*}
 G_T(x,y):=&\int_0^T u(t,x)u(t,y) \d t, \\
 \mathbb{G}_Tf(x):=& \int_{0}^1 G_T(x,y)f(y) \d y, \\
 \|f \|_{G_T}^2:=&\langle \mathbb{G}_Tf,f \rangle= \int_0^{T} \langle f, u(t) \rangle^2 \d t.
\end{align*}
Clearly $\|f \|_{G_T}^2 \leq T \|f \|_2^2.$ Due to Lemma \ref{l:analytic}, $\|f \|_{G_{T_0}}^2=0 \iff \|f \|_{G_T}^2=0$ $(0<T_0 \leq T).$

Similarly to \eqref{eq:H2_noice}, take $\eta(x)(y):=W(x,y)-\hat{W}(x,y)$ for which we have $\W u(t,x)-\hat{\W} u(t,x)= \langle \eta(x),u(t) \rangle$, and
\begin{align}
\label{eq:H2_noice_graphon}
H_{2,T}^2= \int_{0}^{T} \int_{0}^{1} \langle \eta(x),u(t) \rangle^2 \d x \d t=\int_{0}^{1} \| \eta(x) \|_{G_T}^2 \d x.
\end{align}

Throughout the rest of this section, $\rightharpoonup$ denotes weak convergence in $L^2([0,1]).$

\begin{lemma}
\label{l:GT_weak_cont}
 Assume $f_N \rightharpoonup f $ with $\|f_N \|_2^2 \leq 1$. Then $\|f_N \|_{G_T}^2 \to \| f\|_{G_T}^2$.   
\end{lemma}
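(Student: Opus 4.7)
The plan is to rewrite the $G_T$-seminorm as a time integral and then apply the dominated convergence theorem. Concretely, by the definition of $\| \cdot \|_{G_T}$ we have
\begin{align*}
\|f_N \|_{G_T}^2 = \int_0^T \langle f_N, u(t) \rangle^2\,\d t,
\qquad
\|f \|_{G_T}^2 = \int_0^T \langle f, u(t) \rangle^2\,\d t,
\end{align*}
so it suffices to show the integrands converge pointwise in $t$ and admit an integrable dominant on $[0,T]$.

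For the pointwise convergence, fix $t \in [0,T]$. Since $0 \leq u(t,x) \leq 1$, the function $u(t,\cdot)$ lies in $L^2([0,1])$ (with $\|u(t)\|_2 \leq 1$), hence by the definition of weak convergence $f_N \rightharpoonup f$,
\begin{align*}
\langle f_N, u(t) \rangle \to \langle f, u(t) \rangle \qquad (N \to \infty),
\end{align*}
and so $\langle f_N, u(t) \rangle^2 \to \langle f, u(t) \rangle^2$. (Measurability in $t$ is not an issue since $u(t,x)$ is continuous — in fact analytic — in $t$ for each $x$, so $t \mapsto \langle f_N, u(t)\rangle$ is continuous on $[0,T]$.)

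For the dominant, Cauchy–Schwarz together with $\|f_N\|_2 \leq 1$ and $\|u(t)\|_2 \leq 1$ gives
\begin{align*}
\langle f_N, u(t) \rangle^2 \leq \|f_N\|_2^2 \, \|u(t)\|_2^2 \leq 1
\end{align*}
uniformly in $N$ and $t$, and the constant function $1$ is integrable on $[0,T]$. The dominated convergence theorem therefore yields $\|f_N\|_{G_T}^2 \to \|f\|_{G_T}^2$, which is the claim.

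I do not expect any real obstacle here: the proof is a one-line application of weak convergence against the fixed test function $u(t,\cdot) \in L^2([0,1])$, followed by dominated convergence. An alternative, slightly heavier route would be to observe that $G_T \in L^\infty([0,1]^2) \subset L^2([0,1]^2)$ and hence $\mathbb{G}_T$ is Hilbert–Schmidt, so compact; then $f_N \rightharpoonup f$ forces $\mathbb{G}_T f_N \to \mathbb{G}_T f$ in $L^2$, and combining with $\|f_N\|_2 \leq 1$ gives $\langle \mathbb{G}_T f_N, f_N\rangle \to \langle \mathbb{G}_T f, f\rangle$. But the direct DCT argument above is shorter and requires no compactness machinery.
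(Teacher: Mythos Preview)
Your proof is correct and follows essentially the same approach as the paper: write $\|f_N\|_{G_T}^2=\int_0^T \langle f_N,u(t)\rangle^2\,\d t$, use weak convergence against $u(t)\in L^2$ for pointwise convergence, bound the integrand by $1$ via Cauchy--Schwarz, and apply dominated convergence. The paper's proof is just a terser version of exactly this argument.
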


\begin{proof}(Lemma \ref{l:GT_weak_cont})

$\langle f_N,u(t) \rangle^2 \leq \|f_N \|_2^2\cdot \|u(t) \|_2^2 \leq 1$, so dominated convergence applies to show
\begin{align*}
\lim_{N \to \infty} \|f_N \|_{G_T}^2=\int_{0}^{T} \lim_{N \to \infty} \langle f_N, u(t) \rangle ^2 \d t=\int_{0}^T \langle f,u(t) \rangle^2\d t=\|f \|_{G_T}^2.   
\end{align*}
\end{proof}

\begin{lemma}
\label{l:lambda_eigenish}
Define
\begin{align*}
\lambda_N:= \sup_{\|f \|_2^2 = 1} \frac{\|f \|_{G_T}^2}{\|f \|_{G_{T_0}^2}+\frac{1}{N}}
\end{align*}
There exists an $f \in L^2([0,1])$ such that $\|f\|_2^2 \leq 1$ and
\begin{align*}
\lambda_N=\frac{\|f \|_{G_T}^2}{\|f \|_{G_{T_0}^2}+\frac{1}{N}}
\end{align*}
\end{lemma}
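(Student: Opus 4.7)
The plan is to apply the direct method of the calculus of variations: extract a weakly convergent maximizing sequence in $L^2([0,1])$ and pass to the limit using Lemma~\ref{l:GT_weak_cont}. The role of the regularizer $1/N$ in the denominator is precisely to make this argument go through cleanly.

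First, pick a maximizing sequence $f_k \in L^2([0,1])$ with $\|f_k\|_2^2 = 1$ and
\begin{align*}
\frac{\|f_k\|_{G_T}^2}{\|f_k\|_{G_{T_0}}^2 + 1/N} \ \longrightarrow\ \lambda_N.
\end{align*}
Since the closed unit ball of the separable Hilbert space $L^2([0,1])$ is weakly sequentially compact, after passing to a subsequence we have $f_k \rightharpoonup f$ for some $f \in L^2([0,1])$. By weak lower semicontinuity of the norm, $\|f\|_2 \leq \liminf_k \|f_k\|_2 = 1$, which is the bound claimed in the lemma.

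Next, Lemma~\ref{l:GT_weak_cont} (applied once with $T$ and once with $T_0$, whose hypothesis $\|f_k\|_2^2 \leq 1$ holds with equality) yields
\begin{align*}
\|f_k\|_{G_T}^2 \to \|f\|_{G_T}^2, \qquad \|f_k\|_{G_{T_0}}^2 \to \|f\|_{G_{T_0}}^2.
\end{align*}
Because the denominator $\|f_k\|_{G_{T_0}}^2 + 1/N$ stays bounded below by $1/N > 0$, the quotient is continuous at the limit and
\begin{align*}
\lambda_N = \lim_{k\to\infty} \frac{\|f_k\|_{G_T}^2}{\|f_k\|_{G_{T_0}}^2 + 1/N} = \frac{\|f\|_{G_T}^2}{\|f\|_{G_{T_0}}^2 + 1/N},
\end{align*}
exhibiting $f$ as a maximizer.

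I do not expect any step to be a real obstacle: the entire argument is a textbook application of weak sequential compactness together with the weak continuity of $\|\cdot\|_{G_T}^2$ already established in Lemma~\ref{l:GT_weak_cont}. The only minor point worth flagging is that the extracted $f$ might satisfy $\|f\|_2^2 < 1$ strictly; this is why the lemma's conclusion only asserts $\|f\|_2^2 \leq 1$. It is also harmless, since for $f \neq 0$ one can rescale to $f/\|f\|_2$ on the unit sphere and the ratio $\|f\|_{G_T}^2/(\|f\|_{G_{T_0}}^2 + \|f\|_2^2/N)$ only increases (the denominator can only shrink), so the supremum over $\|f\|_2^2 \leq 1$ coincides with that over $\|f\|_2^2 = 1$.
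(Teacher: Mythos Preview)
Your proof is correct and follows essentially the same approach as the paper: take a maximizing sequence on the unit sphere, extract a weakly convergent subsequence using weak sequential compactness of the unit ball in $L^2([0,1])$, and pass to the limit in both numerator and denominator via Lemma~\ref{l:GT_weak_cont}. Your version is slightly more explicit in invoking Lemma~\ref{l:GT_weak_cont} and in noting the $1/N$ lower bound on the denominator, and your closing remark about rescaling is a correct (though not strictly needed) observation.
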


\begin{proof} (Lemma \ref{l:lambda_eigenish})
Clearly $\lambda_N \leq NT.$

There must be a sequence $(f_n) \subset L^2([0,1])$ with $\|f_n \|_2^2= 1$ such that
\begin{align*}
 \lambda_N=\lim_{n \to \infty} \frac{\|f_n \|_{G_T}^2}{\|f_n \|_{G_{T_0}^2}+\frac{1}{N}} 
\end{align*}
 
Since the unit ball is weakly compact in $L^2([0,1])$, it must have a weakly convergent subsequence  $f_{n_k} \rightharpoonup f$  with $\|f \|_2^2 \leq 1$, resulting in
\begin{align*}
 \lambda_N=\lim_{k \to \infty} \frac{\|f_{n_k} \|_{G_T}^2}{\|f_{n_k} \|_{G_{T_0}}^2+\frac{1}{N}} = \frac{\|f \|_{G_T}^2}{\|f \|_{G_{T_0}}^2+\frac{1}{N}}.
\end{align*}
\end{proof}

Since the value of $p \in \mathcal{P}_{0,M}$ varies, we emphasize that $\lambda_N$ and $G_{T}$ depends on $p$, writing $\lambda_N(p)$ and $G_T(p)$ respectively.

\begin{remark}
\label{r:norm_relabel}
Based on Remark \ref{r:u_relabel}:
\begin{align*}
\langle f, u(t) \rangle=&\int_{0}^{1}f(x)u(t,x) \d x=\int_{0}^{1} f^{\phi}(x) u^{\phi}(t,x) \d x=\langle f^{\phi},(u(t))^{\phi} \rangle,\\
\|f \|_{G_T(p)}^2=&\int_{0}^{T} \langle f, u(t) \rangle^2 \d t =\int_{0}^{T} \langle f^{\phi},(u(t))^{\phi} \rangle^2 \d t= \left\|f^{\phi} \right\|_{G_T\left(p^{\phi}\right)}^2.
\end{align*}
\end{remark}

\begin{lemma}
\label{l:chi}
Define
\begin{align*}
    \chi_N:=\sup_{p \in \mathcal{P}_{0,M}} \frac{1}{N} \lambda_N(p).
\end{align*}
Then $\chi_N \to 0$ as $N \to \infty.$
\end{lemma}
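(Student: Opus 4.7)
The plan is to argue by contradiction, combining compactness of $(\tilde{\mathcal{P}}_{0,M}, d_\square)$ with weak compactness in $L^2([0,1])$. Suppose $\chi_N \not\to 0$; then along some subsequence (relabeled by $N$) there is an $\varepsilon > 0$ and $p_N \in \mathcal{P}_{0,M}$ with $\tfrac{1}{N}\lambda_N(p_N) \geq \varepsilon$, and Lemma \ref{l:lambda_eigenish} supplies maximizers $f_N$ with $\|f_N\|_2^2 \leq 1$ so that
\begin{align*}
\|f_N\|_{G_T(p_N)}^2 \geq \varepsilon N\,\|f_N\|_{G_{T_0}(p_N)}^2 + \varepsilon.
\end{align*}
Together with the trivial bound $\|f_N\|_{G_T(p_N)}^2 \leq T\|f_N\|_2^2 \leq T$, this yields the two one-sided bounds $\|f_N\|_{G_T(p_N)}^2 \geq \varepsilon$ and $\|f_N\|_{G_{T_0}(p_N)}^2 \leq \tfrac{T}{\varepsilon N} \to 0$.

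Next I would pass to the limit. Lemma \ref{l:compact} carries over from $m > 0$ to $m = 0$ essentially verbatim, so I can choose measure-preserving bijections $\phi_N \in \Phi$ such that $p_N^{\phi_N} \to p^*$ in cut norm along a further subsequence; by Remark \ref{r:norm_relabel}, replacing $f_N$ by $f_N^{\phi_N}$ preserves both Gram-type norms, so the two bounds above remain valid. Weak compactness in $L^2$ then delivers $f_N^{\phi_N} \rightharpoonup f^*$ with $\|f^*\|_2 \leq 1$, while Remark \ref{eq:cut_norm_conv} gives $u_N^{\phi_N}(t) \to u^*(t)$ in $L^1([0,1])$ uniformly in $t \in [0,T]$; the uniform bound $0 \leq u_N^{\phi_N} \leq 1$ upgrades this to uniform $L^2$ convergence via $\|\cdot\|_2^2 \leq \|\cdot\|_1\|\cdot\|_\infty$.

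To transfer the Gram norms to the limit, write
\begin{align*}
\langle f_N^{\phi_N}, u_N^{\phi_N}(t) \rangle - \langle f^*, u^*(t) \rangle = \langle f_N^{\phi_N} - f^*, u^*(t) \rangle + \langle f_N^{\phi_N}, u_N^{\phi_N}(t) - u^*(t) \rangle;
\end{align*}
for each $t$ the first summand tends to $0$ by weak convergence and the second by Cauchy--Schwarz applied to the strong $L^2$ convergence. Since $|\langle f_N^{\phi_N}, u_N^{\phi_N}(t) \rangle|^2 \leq 1$ uniformly, dominated convergence produces $\|f_N^{\phi_N}\|_{G_T(p_N^{\phi_N})}^2 \to \|f^*\|_{G_T(p^*)}^2$ and likewise for $G_{T_0}$. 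Combined with the bounds from the first paragraph, this forces $\|f^*\|_{G_T(p^*)}^2 \geq \varepsilon > 0$ and $\|f^*\|_{G_{T_0}(p^*)}^2 = 0$, contradicting the analyticity-based identity $\|f\|_{G_{T_0}(p^*)}^2 = 0 \Leftrightarrow \|f\|_{G_T(p^*)}^2 = 0$ noted directly after the definition of $\mathbb{G}_T$.

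The main obstacle is the extraction step: one needs Lemma \ref{l:compact} for $\mathcal{P}_{0,M}$ (the lower bound $m$ plays no role in the martingale argument) and a careful combination of weak convergence of $f_N^{\phi_N}$ with strong $L^2$ convergence of the trajectories $u_N^{\phi_N}(t)$ to legitimately pass the limit inside the time integral. Once these are in place, the analyticity-based rigidity, namely that $\|\cdot\|_{G_{T_0}(p^*)}$ vanishing forces $\|\cdot\|_{G_T(p^*)}$ to vanish, immediately produces the contradiction.
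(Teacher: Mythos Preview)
Your proposal is correct and follows essentially the same route as the paper's proof: contradiction, compactness of $(\tilde{\mathcal{P}}_{0,M},d_\square)$ combined with weak compactness of the $L^2$ unit ball, convergence of the Gram-type norms along the extracted subsequence, and the analyticity-based equivalence $\|f\|_{G_{T_0}}=0 \Leftrightarrow \|f\|_{G_T}=0$ to produce the contradiction. The only cosmetic difference is that you extract the two one-sided bounds $\|f_N\|_{G_T}^2 \geq \varepsilon$ and $\|f_N\|_{G_{T_0}}^2 \to 0$ up front and pass each to the limit, whereas the paper keeps the ratio $\bar\lambda_{N_k}$ intact and does a short case analysis on whether $\|f\|_{G_{T_0}(p)}^2$ is zero or positive; both arrive at the same contradiction.
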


\begin{proof} (Lemma \ref{l:chi}) Clearly $\chi_N \leq T.$

Define $\theta:=\limsup_{N \to \infty} \chi_N$.

Indirectly assume $\theta>0$. Then there must be a subsequence $(N_k)$ such that 
\begin{align*}
   \chi_{N_k} \geq \frac{1}{2}\theta>0 \quad  \forall \, k. 
\end{align*}
Therefore there must be a $p_{N_k} \in \tilde{\mathcal{P}}_{0,M}$ such that
\begin{align*}
    \bar{\lambda}_{N_k}:=\frac{1}{N_k}\lambda_{N_k}\left(p_{N_k} \right)=\frac{\|f_{N_k} \|_{G_T\left(p_{N_k} \right)}^2}{N_k \|f_{N_k} \|_{G_{T_0}\left(p_{N_k} \right)}^2+1} \geq \frac{1}{4} \theta >0
\end{align*}
for some $f_{N_k} \in L^2([0,1]).$

Due to Remark \ref{r:norm_relabel}, we may take an arbitrary $\phi_{N_k} \in \Phi,$ making
\begin{align*}
  \bar{\lambda}_{N_k}:=\frac{\left\|f_{N_k}^{\phi_{N_k}} \right\|_{G_T\left(p_{N_k}^{\phi_{N_k}} \right)}^2}{N_k \left\|f_{N_k}^{\phi_{N_k}} \right\|_{G_{T_0}\left(p_{N_k}^{\phi_{N_k}} \right)}^2+1} . 
\end{align*}

Since $\tilde{\mathcal{P}}_{0,M}$ is compact we can choose a suitable subsequence $(N_k')$ and $\phi_{N_k'} \in \Phi$ such that $ p_{N_k'}^{\phi_{N_K'}} \overset{\| \cdot \|_{\square}}{ \to}p \in \mathcal{P}_{0,M}$. Furthermore, the unit ball of $L^2([0,1])$ is weakly compact and $ \left\|f_{N_k'}^{\phi_{N_k'}} \right\|_2= \left \|f_{N_k'} \right \|_2 \leq 1$, thus, for some further subsequence $(N_k'')$ we also have  $f_{N_k'"}^{\phi_{N_k''}} \rightharpoonup f$ with $\|f \|_2^2 \leq 1.$
Let the solutions of \eqref{eq:u} with parameters $p_{N_k'}^{\phi_{N_k''}}, p$ be $u_{N_k''}, u.$ 
Then $u_{N_k''}(t) \overset{L^1([0,1])}{ \to} u(t)$ uniformly in $t \in [0,T]$ due to Remark \ref{eq:cut_norm_conv}.

Notice
\begin{align*}
 &\left | \left\|f_{N_k''}^{\phi_{N_k''}} \right \|_{G_{T} \left(p_{N_k''}^{\phi_{N_k''}} \right)}^2-\left\|f_{N_k''}^{\phi_{N_K''}} \right\|_{G_{T} \left(p \right)}^2 \right|\leq\\
 & \int_{0}^{T} \left| \left\langle f_{N_k''}^{\phi_{N_k''}},u_{N_k''}(t)  \right\rangle^2 -\left\langle f_{N_k''}^{\phi_{N_k''}},u(t)  \right\rangle^2 \right|  \ \d t  \leq \\
& \int_{0}^{T} \left |\left\langle f_{N_k''}^{\phi_{N_k''}},u_{N_k''}(t)-u(t)  \right\rangle \right| \cdot \left| \left\langle f_{N_k''}^{\phi_{N_k''}},u_{N_k''}(t)+u(t)  \right\rangle\right| \d t \leq \\
& 2 \int_{0}^{T} \|u_{N_k''}(t)-u(t) \|_2 \d t \leq 2 T \sup_{ 0 \leq t \leq T}  \|u_{N_k''}(t)-u(t) \|_2 \overset{|u_{N_k''}(t,x)-u(t,x)| \leq 1 }{\leq} \\
&2T \sqrt{\sup_{ 0 \leq t \leq T}  \|u_{N_k''}(t)-u(t) \|_1 } \to 0,
\end{align*}
 which implies
\begin{align*}
\lim_{k \to \infty} \left \|f_{N_k''}^{\phi_{N_{k}''}}  \right\|_{G_{T} \left(p_{N_k''} \right)}^2=\lim_{k \to \infty} \left\|f_{N_k''}^{\phi_{N_k''}}  \right\|_{G_{T} \left(p \right)}^2=\|f \|_{G_T(p)}^2.
\end{align*}

Define 
\begin{align*}
   \bar{\lambda}:=\lim_{k \to \infty} \bar{\lambda}_{N_k''} =\lim_{k \to \infty} \frac{\left\|f_{N_k''}^{\phi_{N_k''}} \right\|_{G_T\left(p_{N_k''}^{\phi_{N_k''}} \right)}^2}{N_k'' \|f_{N_k''}^{\phi_{N_k''}} \|_{G_{T_0}\left(p_{N_k''}^{\phi_{N_k''}} \right)}^2+1}=\lim_{k \to \infty} \frac{\|f \|_{G_T(p)}^2}{N_k''\|f \|_{G_{T_0}(p)}^2+1}.
\end{align*}
If $\| f\|_{G_{T_0}}^2>0$ then the limit is $\bar{\lambda}=0.$

If $\| f\|_{G_{T_0}}^2=0 \iff \|f \|_{G_T}^{2}=0$, then $\bar{\lambda}=\|f \|_{G_T}^{2}=0.$

This, however, contradicts $\bar{\lambda} \geq \frac{1}{4} \theta >0$.
\end{proof}

Finally we can prove Theorem \ref{t:uniform_prediction_stronger} which in turn proves Theorem \ref{t:uniform_prediction}.

\begin{proof}(Theorem \ref{t:uniform_prediction_stronger}.)

$ \|\eta(x) \|_2 \leq \|\eta(x) \|_{\infty}  \leq M. $ for almost every $x \in [0,1]$.
Furthermore, $H_{2,T}^2$ can be expressed as
\begin{align*}
H_{2,T}^2=&\int_{0}^1 \|\eta(x) \|_{G_T}^2 \d x \leq \int_{0}^{1} \lambda_N  \left(\|\eta(x) \|_{G_{T_0}^2}+\frac{1}{N}\|\eta(x) \|_2^2 \right) \d x \\  
 \leq &\lambda_{N}\underbrace{\int_{0}^T \| \eta(x) \|_{G_{T_0}}^2  \d x}_{=H_{2,T_0}^2}+M^2 \frac{\lambda_N}{N} \leq NTH_{2,T_0}^2+M^2 \chi_N.
\end{align*}

Fix $\varepsilon>0$. Based on Lemma \ref{l:chi} we can choose a large enough $N$ such that $M^2 \chi_N<\frac{\varepsilon^2}{2}.$ Set $\delta:=\frac{\varepsilon}{\sqrt{2NT}}.$

From this we can say that $H_{2,T_0}<\delta$ implies $H_{2,T}<\varepsilon.$ Note that the choice of $\delta$ only depends on $\varepsilon$ and it is independent of the choice of $p \in \mathcal{P}_{0,M}.$
\end{proof}

\subsection{Proofs for Section \ref{s:discretemeasure}}
\begin{proof} (Lemma \ref{l:ztpi2s}).
\begin{align*}
&z(t)-z^*(t)=z(0)-z^*(0)+
\sum_{i:t_i\leq t} (z(t_i)-z(t_{i-1}))-(z^*(t_i)-z^*(t_{i-1}))+O(\Delta_4),
\end{align*}
so
\begin{align*}
&\|z(t)-z^*(t)\|_{\pi,2}\leq\\
&\|z(0)-z^*(0)\|_{\pi,2}+
\max_{1\leq i\leq J}\frac{1}{t_i-t_{i-1}}\|(z(t_i)-z(t_{i-1}))-(z^*(t_i)-z^*(t_{i-1}))\|_{\pi,2}
\\
&\times \sum_{i:t_i\leq t} (t_i-t_{i-1}) + O(\Delta_4) \leq
\Delta_1+\Delta_3 T_0 + O(\Delta_4)= O(\Delta).
\end{align*}
\end{proof}

\begin{proof} (Theorem \ref{t:numerror}).
We have
\begin{align*}
\nonumber
&|\alpha_j(t_i)-({\mathcal{A}}z(t_i))_j|=
O\Bigg(|z_j(t_i)-z^*_j(t_i)|+|\gamma_j-\gamma_j^*|+\\
&\frac{1}{t_{i+1}-t_i}|(z_j(t_i)-z_j(t_{i-1}))-(z_j^*(t_i)-z_j^*(t_{i-1}))|+\\
&+\left|\frac{1}{t_{i}-t_{i-1}}(z_j(t_i)-z_j(t_{i-1}))-\frac{\d}{\d t}z_j(t_i)\right|
\Bigg)=O(\Delta),
\end{align*}
from which
\begin{align}
\label{eq:alphaAz}
&\sum_{i=1}^J\sum_{j=1}^n |\alpha_j(t_i)-({\mathcal{A}}z(t_i))_j|^2\pi_j (t_i-t_{i-1})= O(\Delta^2).
\end{align}
We also have
\begin{align}
\label{eq:AzhatAz}
&\sum_{i=1}^J\sum_{j=1}^n |({\hat{\mathcal{A}}}z(t_i))_j-({\hat{\mathcal{A}}}z^*(t_i))_j|^2\pi_j (t_i-t_{i-1})= O(\Delta^2)
\end{align}
since $\hat{\mathcal{A}}$ is bounded in $\|.\|_{\pi,2}$.

Putting \eqref{eq:alphaAz} and \eqref{eq:AzhatAz} together, we obtain
\begin{align}
\nonumber
&(H_{2,T_0}^*(\hat W))^2=
\sum_{i=1}^J\sum_{j=1}^n |\alpha_j(t_i)-(\hat{\mathcal{A}}z^*(t_i))_j|^2\pi_j (t_i-t_{i-1})=\\
\label{eq:AzhatAz2}
&=\sum_{i=1}^J\sum_{j=1}^n |({\mathcal{A}}z(t_i))_j-({\hat{\mathcal{A}}}z(t_i))_j|^2\pi_j (t_i-t_{i-1})+ O(\Delta)=\\
\nonumber
&=(H_{2,T_0}(\hat W))^2+O(\Delta).
\end{align}
In the last step, the error of the numerical integration is order $O(\Delta_4)$.

(We also implicitly used $(a+b)^2\leq 2(a^2+b^2)$ as a version of the triangle inequality for squares; this affects the constant factor in $O(\cdot)$. Also, the formula $(x+\varepsilon)^2=x^2+2x\varepsilon+\varepsilon^2$ means the final error in \eqref{eq:AzhatAz2} is $O(\Delta)$ instead of $O(\Delta^2)$.)
\end{proof}

\bibliographystyle{abbrv}
\bibliography{mf}

\end{document}